\theoremstyle{plain}
\newtheorem{Theorem}{Theorem}[section]
\newtheorem{Lemma}[Theorem]{Lemma}
\newtheorem{Corollary}[Theorem]{Corollary}
\newtheorem{Proposition}[Theorem]{Proposition}
\newtheorem*{Proposition*}{Proposition}
\theoremstyle{definition}
\newtheorem{Condition}{Condition}
\newtheorem{Remark}[Theorem]{Remark}
\newtheorem{Example}[Theorem]{Example}
\newtheorem{Definition}[Theorem]{Definition}
\providecommand{\CC}{\mathbb{C}}
\providecommand{\NN}{\mathbb{N}}
\providecommand{\RR}{\mathbb{R}}
\newcommand{\cA}{\mathcal{A}}
\newcommand{\cC}{\mathcal{C}}
\newcommand{\cD}{\mathcal{D}}
\newcommand{\cE}{\mathcal{E}}
\newcommand{\cF}{\mathcal{F}}
\newcommand{\cH}{\mathcal{H}}
\newcommand{\cL}{\mathcal{L}}
\newcommand{\cM}{\mathcal{M}}
\newcommand{\cO}{\mathcal{O}}
\newcommand{\cP}{\mathcal{P}}
\newcommand{\cR}{\mathcal{R}}
\newcommand{\sH}{\mathsf{H}}
\newcommand{\sT}{\mathsf{T}}
\newcommand{\sV}{\mathsf{V}}
\providecommand{\abs}[1]{\left \lvert #1 \right \rvert}
\providecommand{\norm}[1]{\left \lVert #1 \right \rVert}
\providecommand{\scpro}[2]{\left \langle #1 , #2 \right \rangle}
\providecommand{\bscpro}[2]{\bigl \langle #1 , #2 \bigr \rangle}
\DeclareMathOperator\tr{Tr}
\DeclareMathOperator\vol{vol}
\DeclareMathOperator\End{End}
\DeclareMathOperator{\dom}{D}
\DeclareMathOperator\dist{dist}
\DeclareMathOperator{\rank}{rank}
\DeclareMathOperator{\supp}{supp}
\providecommand{\dd}{\mathrm{d}}
\providecommand{\I}{\mathrm{i}}
\providecommand{\iec}{i.e.,~}
\SetMathAlphabet{\mathcal}{normal}{OMS}{cmsy}{m}{n}
\DeclareSymbolFont{lettersA}{U}{txmia}{m}{it}
\DeclareMathSymbol{\updelta}{\mathord}{lettersA}{"0E}
\newcommand{\eps}{\varepsilon}
\newcommand{\Ha}{H_\mathrm{a}}
\newcommand{\Heff}{H_\mathrm{eff}}
\newcommand{\id}{1}
\begin{document}

\title{The Adiabatic Limit of the Connection Laplacian}
\author{Stefan Haag}
\email{stefan.haag@allianz.de}
\address{Allianz Lebensversicherungs-AG, Reinsburgstra{\ss}e 19, 70178 Stuttgart, Germany}

\author{Jonas Lampart}
\email{jonas.lampart@u-bourgogne.fr}
\address{CNRS and Laboratoire Interdisciplinaire Carnot de Bourgogne (UMR 6303 CNRS-Université de Bourgogne Franche-Comté), 9 Avenue Alain Savary, 21078 Dijon, France}

\begin{abstract}
We study the behaviour of Laplace-type operators $H$ on a complex vector bundle $\cE\to M$ in the adiabatic limit of the base space. This space is a fibre bundle $M\to B$ with compact fibres and the limit corresponds to blowing up directions perpendicular to the fibres by a factor $\eps^{-1}\gg 1$. Under a gap condition on the fibre-wise eigenvalues we prove existence of effective operators that provide asymptotics to any order in $\eps$ for $H$ (with Dirichlet boundary conditions), on an appropriate almost-invariant subspace of $L^2(\cE)$.
\end{abstract}

\maketitle


\section{Introduction}
The adiabatic limit of a Riemannian fibre bundle is given by a rescaling, under which the lengths in the fibres (\enquote{vertical directions}) are of order $\eps\ll 1$ compared to the lengths in the base (\enquote{horizontal directions}).
Such scalings arise naturally for some systems studied in physics and also provide a useful tool for geometry, as many properties of the underlying manifold may be analysed more readily in the asymptotic regime $\eps\to 0$.
In this paper we will consider fibre bundles whose fibres are compact manifolds with boundary and whose base is complete, but not necessarily compact. We analyse the asymptotic behaviour of Laplace-type operators on complex vector bundles over these spaces, with Dirichlet conditions on the boundary, in the adiabatic limit.

A particularly interesting example of such an operator is the Hodge Laplacian on (complex valued) differential $p$-forms. In the case of closed manifolds its adiabatic limit was studied by Mazzeo and Melrose~\cite{Mazzeo1990}, who related the asymptotic calculation of its kernel, the $p$-th de Rham cohomology, in the adiabatic limit to Leray's spectral sequence
(see also Forman~\cite{Forman1995}, and \'{A}lvarez L\'{o}pez and Kordyukov~\cite{Lopez2000} for extensions and further references).
Our analysis applies to the Hodge Laplacian (see Example~\ref{ex:Hodge} for details) on rather general non-compact fibre bundles with compact fibres and a boundary.
However, when $\partial M\neq \varnothing$ we treat only Dirichlet boundary conditions and not the releative/absolute conditions related to the respective de Rham cohomolgies on manifolds with boundary.  
Nevertheless, we believe that the techniques we develop can be useful for understanding the $L^2$-cohomology of these manifolds (see Schick~\cite{SchickDiss}) in the adiabatic limit.

In physics, the adiabatic limit appears in the modelling of quantum waveguides, where the manifold in question is a small tubular neighbourhood of a given submanifold of Euclidean space.
The scalar Schrödinger equation in such tubes has been studied thoroughly, an overview of the vast literature can be found in the book of Exner and Kova\v{r}{\'i}k~\cite{Exner2015}. A more geometric view is presented in our recent work with Teufel~\cite{Haag2015}. Similar results were obtained for the heat equation in thin tubes in Riemannian manifolds by Wittich~\cite{Wittich2007}, and Kolb and Krej\v{c}i\v{r}\'{i}k~\cite{Kolb2014}. 
There are also some results for operators built from a non-trivial connection on a line bundle over the waveguide-manifold, which models an external magnetic field~\cite{Exner2001,Borisov2005,Ekholm2005,Bedoya2013,Krejcirik2014,Krejcirik2015}. Our work here will pave the way for the analysis of magnetic fields in the generalised waveguides considered in~\cite{Haag2015}. The details of this application are given in~\cite{Haag2016b}.  Moreover, taking vector bundles of higher rank, our framework allows for the modelling of particles that couple to a non-Abelian gauge field.

Let us now introduce our geometric and analytical setup and give an outline of our approach.
Let $M\xrightarrow{\pi_M} B$ be a connected, smooth fibre bundle of Riemannian manifolds with compact fibre $F$. We assume that the base manifold $B$ is complete, but not necessarily compact, and we allow the total space $M$ to possess a boundary (which is of course the case if and only if $\partial F\neq \emptyset$) and denote by $M_x=\pi_M^{-1}(x)\cong F$ the fibre of $M$ over $x\in B$. 
Let $g$ and $g_B$ be Riemannian metrics on $M$ and $B$, respectively. The tangent bundle of $M$ decomposes into a \textit{vertical} subbundle, the vectors tangent to the fibres, or, equivalently, in the kernel of $\sT \pi_M$, and the \textit{horizontal} vectors, orthogonal to the fibres,
\begin{equation*}
 \sT M = \ker(\sT\pi_M)^\bot \oplus \ker(\sT \pi_M) =: \sH M \oplus \sV M.
\end{equation*}
We assume that $\pi_M$ is a Riemannian submersion for the metrics $g$, $g_B$, that is, $\sT \pi_M{:}\,\sH M \to \sT B$ is an isometry.
The metric $g$ is then called a Riemannian submersion metric and may be written as
\begin{equation*}
g = \pi_M^* g_B + g_\sV,
\end{equation*}
where $g_\sV$ is the restriction of $g$ to the vertical subbundle (and vanishes on horizontal vectors). 
The adiabatic limit is then implemented by a blow-up of the horizontal directions, \iec by the family of rescaled Riemannian submersion metrics
\begin{equation}
\label{eq:geps}
g_\eps = \eps^{-2} \pi_M^* g_B + g_\sV
\end{equation}
for $\eps \ll 1$. Up to a global rescaling, this corresponds to shrinking the length scale of the fibres of $M$ by a factor $\eps$. A simple example of such a family of metrics is given by $g=\eps^{-2}\, \dd x^2 + f(x)\,\dd y^2$ on $\RR\times [0,1]$, for some positive function $f\in C^\infty_b(\RR)$.

Let also $\cE\xrightarrow{\pi_\cE} M$ be a  complex vector bundle with a Hermitian bundle metric $h$ and a metric connection $\nabla^\cE$.
Consider the composition of projections $\Pi_\cE:=\pi_M\circ\pi_\cE$. The fibre of this map $\cE_x = \Pi_\cE^{-1}(x)$ is just the restriction of $\cE$ to the fibre ${M_x}$, and as such itself a vector bundle over $M_x$. In fact, $\Pi_\cE{:}\,\cE\to B$ defines a fibre bundle over $B$, whose typical fibre $\cF\xrightarrow{\pi_\cF} F$ is a vector bundle over $F$. We will examine the construction of this vector bundle, in particular its analytical properties, in Proposition~\ref{prop:bound_cE}.

Let $\cH$ be the Hilbert space of square-integrable sections of $(\cE,h)\xrightarrow{\pi_\cE}(M,g)$ associated with the unscaled Riemannian submersion metric $g=g_{\eps=1}$. We will consider operators of the form
\begin{equation}
\label{eq:HE}
H := - \Delta_{g_\eps}^\cE + \eps H_1 + V
\end{equation}
on $\cH$, with Dirichlet boundary conditions. Here, $-\Delta_{g_\eps}^\cE=-\tr_{g_\eps} (\nabla^\cE)^2=(\nabla^\cE)^*\nabla^\cE$ is the connection Laplacian associated with the connection $\nabla^\cE$ and the metrics $g_\eps$ and $h$, defined by the quadratic form
\[
\scpro{\psi}{-\Delta_{g_\eps}^\cE \psi}_\cH = \int_M \tr_{g_\eps} h\bigl(\nabla^\cE_\cdot\psi,\nabla^\cE_\cdot \psi\bigr)\vol_g.
\]
Moreover, $V$ is an $\End(\cE)$-valued potential and $H_1$ is a perturbation. For example, $H_1$ could be a second-order differential operator modelling small perturbations of the metric $g_\eps$ or the connection $\nabla^\cE$, which appear in typical applications (see Example~\ref{ex:Hodge} and~\cite{Haag2016b}). Our goal will be to obtain precise asymptotics for $H$, its spectrum and dynamics, as $\eps$ tends to zero.

The structure of the rescaled Riemannian submersion~\eqref{eq:geps} yields a splitting of the associated connection Laplacian
\[
-\Delta_{g_\eps}^\cE = -\eps^2 \Delta^\cE_\sH - \Delta_\sV^\cE
\]
into a horizontal Laplacian $\Delta^\cE_\sH = \tr_{\pi_M^* g_B}(\nabla^\cE)^2 - \nabla^\cE_{\eta_\sV}$, where $\eta_\sV$ is the mean curvature vector of the fibres $M_x\hookrightarrow(M,g)$ (for $\eps=1$), and a vertical Laplacian $\Delta_\sV^\cE$. Consequently, the operator~\eqref{eq:HE} takes the form
\[
H = - \eps^2 \Delta^\cE_\sH + \eps H_1 + H^\cF
\]
with fibrewise vertical operator
\[
H^\cF:= -\Delta_\sV^\cE + V.
\]
Under the assumption of bounded geometry (see Section~\ref{sect:bg}), $H$ is self-adjoint and non-negative on the Dirichlet domain $\dom(H)=W^2(\cE)\cap W^1_0(\cE)$ (We denote by $W^k(\cE)=W^{k,2}(\cE)$ the $L^2$-Sobolev space of sections of $\cE$, see Section~\ref{sect:bg}). The same is true for $H^\cF(x)$ acting on sections of $\cE_x\xrightarrow{\pi_{\cE_x}}M_x$, 
with domain $\dom(H^\cF(x))=W^2(\cE_x)\cap W^1_0(\cE_x)$. 
Throughout this paper, we will denote by $H$, $H^\cF$ the unbounded self-adjoint operators defined on their respective domains, while $\Delta_{g_\eps}^\cE$, $\Delta_\sV^\cE$, $\Delta_\sH^\cE$ refer to differential operators, without reference to a specific domain.

Because the fibres $M_x$ are compact, the spectrum of the elliptic operator $H^\cF(x)$ is a discrete set of eigenvalues of finite multiplicity accumulating at infinity. An eigenband is a continuous function $\lambda{:}\,B\to\RR$ that is an eigenvalue of $H^\cF(x)$ for every fixed $x\in B$, \iec $\lambda(x)\in\sigma(H^\cF(x))$ for all $x\in B$. Given an eigenband, we denote by $P_0(x)$ the spectral projection to $\mathrm{ker}\left(H^\cF(x)-\lambda(x)\right)$.
The adiabatic operator associated to the eigenband $\lambda$ is given by
\[
\Ha := P_0 H P_0 = P_0 \bigl(-\eps^2 \Delta^\cE_\sH + \eps H_1\bigr)P_0 + \lambda P_0.
\]
We will show that, under appropriate assumptions, $\Ha$ provides an approximation of $HP_0$ with errors of order $\eps$ and refine this approximation  to accuracy $\eps^{N}$, for arbitrary $N\in \NN$.

Throughout this paper, we will exclusively treat eigenbands with a (local) spectral gap. The generalisation of our results to a group of eigenbands that is separated from the rest of the spectrum is straightforward, but  we will not perform this for the sake of a simpler presentation. The precise condition we require is:

\begin{Condition} \label{cond:gap}
There exist $\delta>0$ and $f_\pm\in C_\mathrm{b}(B)$ with $\dist(f_{\pm}(x),\sigma(H^\cF(x)))\geq \delta$ such that
\[
\bigl[f_-(x),f_+(x)\bigr]\cap \sigma\bigl(H^\cF(x)\bigr) = \lambda(x)
\]
for all $x\in B$. 
\end{Condition}
This condition immediately implies that $\lambda$ is bounded, and also that $\lambda$ is smooth, see Proposition~\ref{prop:R in A}.

It will be convenient to view $H^\cF(x)$, $P_0(x)$ and similar objects as bundle maps on infinite-dimensional vector bundles over $B$ whose fibre at $x\in B$ is given by the space of $L^2$-sections $L^2(\cE_x, h\vert_{\cE_x})$ (or subspaces thereof). 
These vector bundles are constructed as follows: First, note that, by compactness of $M_x\cong F$, the topology of $L^2(\cE_x, h\vert_{\cE_x})$ does not depend on $x$ and the spaces at different points $x$ are isomorphic to a the fixed space $L^2(\cF)$ (as topological vector spaces). A vector bundle with fibre $L^2(\cF)$ is then defined by specifying transition functions $\tau_{U_1, U_2}$ between open sets $U_1,U_2\subset B$. In our case, these are induced by the local trivialisations $\Phi_{j}{:}\,\pi_M^{-1}(U_{j})\to U_{j}\times F$ and $\Psi_{j}{:}\,\Pi_\cE^{-1}(U_{j})\to U_{j}\times\cF$, $j\in \{1,2\}$, of the bundles $M\xrightarrow{\pi_M}B$ and $\cE\xrightarrow{\Pi_\cE}B$, respectively:
 \[
\tau_{U_1, U_2}{:}\,(U_1\cap U_2)\times L^2(\cF)\to (U_1\cap U_2)\times L^2(\cF),\quad (x,\phi) \mapsto \bigl(x,\tau_{U_1, U_2}(x)\phi\bigr)
\]
with
\[
\tau_{U_1, U_2}(x)\phi :=  \Psi_1|_{\cE_x}\circ \Psi_2|^{-1}_{\cE_x}\circ \phi\circ \Phi_2|_{M_x}\circ\Phi_1|_{M_x}^{-1}.
\]
This defines a topological vector bundle over $B$ with typical fibre $L^2(\cF)$ that we denote by $\cH_\cF$. Similarly, we construct a vector bundle $\cD_\cF\subset\cH_\cF$ over $B$ with fibres $(\cD_\cF)_x=\dom(H^\cF(x))$. We treat these as Hermitian vector bundles with the natural pairings induced by $h$, $\nabla^\cE$ and $g_\sV$. The spaces of continuous and fibre-wise maps between vector bundles clearly have a vector bundle structure and the vertical operator $H^\cF$ as well as the associated spectral projection $P_0(x)$ define bounded sections of $\cL(\cD_\cF,\cH_\cF)$  respectively $\cL(\cH_\cF)$.

The finite multiplicity of the eigenvalues of $H^\cF(x)$ immediately implies the finite rank of the projection $P_0(x)$ for all $x\in B$. If $\lambda$ additionally satisfies Condition~\ref{cond:gap}, $P_0$ is a continuous section of $\cL(\cH_\cF)$  (see Proposition~\ref{prop:R in A}), and $\rank(P_0)=\tr(P_0)$ must be constant. Hence, the eigenspace bundle $\cP := P_0 \cH_\cF$ is a well-defined (topological) subbundle of $\cH_\cF$ of finite rank (this bundle also has a natural smooth structure, since $H^\cF$ has a smooth family of eigenfunctions, see~\cite[Prop.~B.7]{Lampart2014}). Via the identification $\cH\cong L^2(\cH_\cF)$ (see \cite[Corollary B.6]{Lampart2014}), the operator $P_0$ defines a bounded operator on $\cH$, whose image $P_0\cH$ is isomorphic to $L^2(\cP)$, the $L^2$-sections of the finite-rank vector bundle $\pi_\cP:\cP\to B$.

The adiabatic operator $\Ha$ acts on $L^2(\cP)$, and on this space we have
\begin{equation*}
 (H-\Ha)P_0= [H, P_0]P_0= \bigl[-\eps^2 \Delta^\cE_\sH + \eps H_1,P_0\bigr]P_0.
\end{equation*}
This commutator is of order $\eps$ as an operator from $\dom(H)$ to $\cH$. We want to caution that for a
 fixed horizontal vector field $X$, $\nabla_{\eps X}^\cE$ is in itself not of order $\eps$, as $\eps X$ is of fixed $g_\eps$-length. However, an expression such as $[\nabla^\cE_{\eps X}, P_0]P_0$, that appears in the commutator, is of order $\eps$ because it is essentially $\eps$-times the derivative of the $\eps$-independent object $P_0$.
 
The commutator $[H,P_0]$ being of order $\eps$ implies that the space $P_0\cH=L^2(\cP)$ is invariant under $H$ up to errors of order $\eps$, and $\Ha$ gives an approximation of $H$ on this space with errors of this order.
Starting from this point, we will improve the approximation and find projections $P_\eps$ such that $P_\eps \cH$ is invariant under $H$ up to errors of order $\eps^N$ (for any given $N\in \NN$).
This generalises the work of the second author with Teufel~\cite{Lampart2016}, where such an approximation was derived for the scalar case $\cE=M\times \CC$, $\nabla^\cE=\dd$.
The generalisation of these results to vector bundles requires an in-depth discussion of the analytical setup. We discuss the structure  of the fibre bundle $\cE \xrightarrow{\Pi_\cE} B$, whose fibre $\cE_x$ is the vector bundle $\cE\vert_{M_x}\cong \cF$, in Section~\ref{sect:bg} and show that it inherits a specific form of bounded geometry from $\cE\to M$ and $M\to B$ in Proposition~\ref{prop:bound_cE}. We then show how $\nabla^\cE$ gives rise to a covariant derivative on $\cH_\cF$, which can be used to calculate objects such as $[\nabla^\cE_{\eps X}, P_0]$ locally over $U\subset B$, even though the fibres of the bundle $\cE \xrightarrow{\Pi_\cE} B$ may themselves have a non-trivial bundle structure $\cF\to F$ (see Lemma~\ref{lem:hor_con}).
With this setup in place, the construction of $P_\eps$ can be performed along the lines of~\cite{Lampart2016}, whose method is inspired by space-adiabatic perturbation theory, which was developed for flat geometries in the context of the Born-Oppenheimer approximation, see~\cite{Martinez2002,Sordoni2003,Teufel2003,Nenciu2004,Panati2007,Martinez2009}. We also provide an improved version of an important lemma in this construction (Lemma~\ref{lem:sadiabatic}).

We will begin by stating our main result and discussing some of its corollaries.
We then work out the details of our geometric and analytical framework
in Section~\ref{sect:bg} and prove the main theorem in Section~\ref{sect:super}.

\subsection{Main Results}

We will assume throughout that the underlying geometry obeys appropriate boundedness properties (Condition~\ref{cond:geometry}, see Section~\ref{sect:bg}), that the potential $V\in C_\mathrm{b}^\infty(\End(\cE))$ is symmetric, smooth and bounded with all its derivatives (cf.~Definition~\ref{def:bound_tensor}) and that the operator $H_1$ satisfies Condition~\ref{cond:H1} (Section~\ref{sect:super}). 
Our main result is the existence of a super-adiabatic projection $P_\eps$, close to $P_0$, which almost commutes with $H$.

\begin{Theorem} \label{thm:Peps}
Let $\lambda$ be an eigenband of $H^\cF$ with a spectral gap (Condition~\ref{cond:gap}) and let $P_0$ be the associated fibre-wise spectral projection.
Then, for all $N\in\NN$ and $\Lambda>0$, there exists an orthogonal projection $P_\eps\in \cL(\cH)\cap \cL(\dom(H))$, satisfying $P_\eps-P_0=\cO(\eps)$ in $\cL(\cH)$ and $\cL(\dom(H))$, such that
\[
\norm{[H,P_\eps]\varrho(H)}_{\cL(\cH)} = \cO(\eps^{N+1})
\]
for every Borel function $\varrho{:}\,\RR\to[0,1]$ with support in $(-\infty,\Lambda]$.
\end{Theorem}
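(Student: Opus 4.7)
The plan is to construct $P_\eps$ in two stages, following the scheme of space-adiabatic perturbation theory as in \cite{Lampart2016}: first build a formal asymptotic projection $\tilde P_\eps^{(N)} = \sum_{n=0}^N \eps^n P_n$ satisfying the three defining relations of an orthogonal projection commuting with $H$, modulo $\eps^{N+1}$; then convert this almost-projection into a genuine orthogonal projection via a Riesz contour integral.

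For the formal construction, I would inductively determine the $P_n$ by imposing, at each order $n\geq 1$,
\begin{align*}
\bigl(\tilde P_\eps^{(N)}\bigr)^2 &= \tilde P_\eps^{(N)} + \cO(\eps^{N+1}),\\
\bigl(\tilde P_\eps^{(N)}\bigr)^* &= \tilde P_\eps^{(N)} + \cO(\eps^{N+1}),\\
\bigl[H, \tilde P_\eps^{(N)}\bigr] &= \cO(\eps^{N+1}).
\end{align*}
Decomposing $P_n$ into diagonal and off-diagonal parts relative to $P_0$, the idempotency and self-adjointness conditions determine the diagonal part of $P_n$ algebraically from $P_0,\dots,P_{n-1}$. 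The commutator condition reduces, at order $n$, to a Sylvester equation of the form $[H^\cF-\lambda,\,P_n^{\mathrm{od}}]=R_n$, whose right-hand side is off-diagonal and built from $P_0,\dots,P_{n-1}$ together with the horizontal operators $\Delta_\sH^\cE$ and $H_1$. Condition~\ref{cond:gap} ensures that the reduced resolvent $(H^\cF-\lambda)^{-1}(1-P_0)$ exists as a smooth, bounded section (Proposition~\ref{prop:R in A}), which then provides the solution.

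In the second stage, since $\tilde P_\eps^{(N)}$ is approximately idempotent and self-adjoint, its spectrum clusters uniformly near $\{0,1\}$ for $\eps$ small, so
\[
P_\eps := \frac{\I}{2\pi}\oint_{|z-1|=1/2}\bigl(z-\tilde P_\eps^{(N)}\bigr)^{-1}\,\dd z
\]
is a genuine orthogonal projection that differs from $\tilde P_\eps^{(N)}$ by $\cO(\eps^{N+1})$. Combined with the energy cut-off $\varrho(H)$ and the bound $H\varrho(H)\in\cL(\cH)$, this transfers the commutator estimate from $\tilde P_\eps^{(N)}$ to $P_\eps$ and yields $P_\eps-P_0=\cO(\eps)$ in $\cL(\cH)$, as well as in $\cL(\dom(H))$ after invoking elliptic regularity for $H$.

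The main obstacle is making every estimate rigorous in operator norm on $\cH$ rather than as a formal or fibre-wise relation. The horizontal operators $\eps^2\Delta_\sH^\cE$ and $\eps H_1$ are second order, and their commutators with the $P_n$ --- essentially horizontal derivatives of $P_0$ --- a priori only map $W^k(\cE)$ into $W^{k-1}(\cE)$. Controlling these uniformly over the non-compact base is precisely where the geometric framework of Section~\ref{sect:bg} enters: the bounded geometry of $\cE\to B$ (Proposition~\ref{prop:bound_cE}) and the horizontal covariant derivative on $\cH_\cF$ (Lemma~\ref{lem:hor_con}) allow the relevant commutators to be written locally as bundle operators with uniformly bounded smooth coefficients, and the improved space-adiabatic lemma (Lemma~\ref{lem:sadiabatic}) should package the resulting norm estimates for the $P_n$. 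The cut-off $\varrho(H)$ supplies the remaining Sobolev regularity: by elliptic estimates, vectors in the range of $\varrho(H)$ lie in every $W^k(\cE)$ with $\eps$-uniform bounds, which converts the derivative losses caused by $H_1$ and $\Delta_\sH^\cE$ into genuine $\cL(\cH)$-norm bounds and produces the claimed $\cO(\eps^{N+1})$ estimate.
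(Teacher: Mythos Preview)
Your two-stage plan matches the paper's strategy, and your description of the inductive construction of the $P_n$ (diagonal part from idempotency, off-diagonal part from a Sylvester equation solved by the reduced resolvent) is essentially how Lemma~\ref{lem:sadiabatic} works. The gap is in the second stage.

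The almost-projection $\tilde P_\eps^{(N)}=\sum_{n=0}^N \eps^n P_n$ is \emph{not} a bounded operator on $\cH$: each $P_n$ lies in $\cA_H^{2n,0}$, i.e.\ it is a horizontal differential operator of order $2n$, so $\tilde P_\eps^{(N)}$ is only bounded from $W^{2N}_\eps(\cE)$ to $\cH$. Consequently the statement ``its spectrum clusters uniformly near $\{0,1\}$'' has no meaning on $\cH$, and the Riesz contour integral $\oint (z-\tilde P_\eps^{(N)})^{-1}\,\dd z$ cannot be taken as written. Invoking $\varrho(H)$ \emph{after} this step does not help, because the projection $P_\eps$ must be defined as a bounded operator on all of $\cH$ (and on $\dom(H)$), not just on the range of a cut-off.

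The paper inserts an intermediate regularisation step that you are missing: one first defines
\[
P^\chi := P_0 + (P^N-P_0)\chi(H) + \chi(H)(P^N-P_0)\bigl(1-\chi(H)\bigr),
\]
with a smooth cut-off $\chi$ supported below $\Lambda$ and equal to one on $\supp\varrho$. Since $\chi(H)$ maps $\cH$ boundedly into every $W^{k}_\eps(\cE)$ (elliptic regularity), $P^\chi$ \emph{is} a bounded self-adjoint operator on $\cH$, with $P^\chi=P_0+\cO(\eps)$ in both $\cL(\cH)$ and $\cL(\dom(H))$. Only then does the Riesz integral around $z=1$ produce a genuine orthogonal projection $P_\eps$. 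The commutator estimate survives because $\chi\equiv 1$ on $\supp\varrho$ gives $[H,P^\chi]\varrho(H)=[H,P^N]\varrho(H)$, and now the $W^{2N+2}_\eps$-regularity supplied by $\varrho(H)$ is exactly what is needed to apply Lemma~\ref{lem:sadiabatic}.
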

%

Once the construction of the super-adiabatic projection $P_\eps$ is established for some fixed $N\in\NN$ and $\Lambda>0$ we can construct a unitary operator $U_\eps$ that intertwines $P_\eps$ and $P_0$ (\iec $U_\eps P_0 = P_\eps U_\eps$). It is given by the Sz.-Nagy formula (with the abbreviations $P_0^\bot:=\id_\cH - P_0$ and $P_\eps^\bot := \id_\cH - P_\eps$)
\begin{equation*}
U_\eps := \bigl(P_\eps P_0 + P_\eps^\bot P_0^\bot\bigr)\bigl(\id_\cH - (P_0 - P_\eps)^2\bigr)^{1/2}.
\end{equation*}
We then define the associated effective operator 
\begin{equation} \label{eq:Heff}                                                                                                                           
\Heff := U_\eps^* P_\eps H P_\eps U_\eps.                                                                                                                      
\end{equation}

This operator is self-adjoint on $U_\eps^* P_\eps \dom(H)\subset L^2(\cP)$, due to the fact that $[H,P_\eps]=[H,P_0]+\cO(\eps)=\cO(\eps)$ and the Kato-Rellich theorem.
While $H$
acts on $L^2$-sections of $\cE$ (a finite-rank vector bundle over $M$),
$\Heff$ acts on $L^2$-sections of $\cP$ (a finite-rank
vector bundle over the lower dimensional manifold $B$). Hence, the approximation of the initial operator by the effective operator is a dimensional reduction procedure.

The existence of the almost invariant subspace $P_\eps \cH$ for $H$ gives rise to various corollaries on the approximation of spectral and dynamical properties of $H$ using the effective operator $H_\mathrm{eff}$. 
We will state some of these and explain the general ideas behind them. 
These corollaries depend only on the general structure of  Theorem~\ref{thm:Peps} and not on the details of the problem, such as the choice of $\cH=L^2(\cE)$. We thus refer to~\cite{Lampart2016} for complete proofs and focus here on the construction of~$P_\eps$.

The existence of the effective operator allows us to locate (part of) the spectrum of $H$ with high precision, by taking the eigenfunctions (or a Weyl sequence) of $\Heff$ as quasi-modes for $H$.

\begin{Corollary} \label{cor:spect}
Let $N\in \NN$ and $\Lambda>0$ be as in Theorem~\ref{thm:Peps}, and let $\Heff$ be the associated effective operator~\eqref{eq:Heff}. Then, for every $\delta>0$ there exist constants $\eps_0>0$ and $C>0$ such that for all~$\mu\in\sigma(\Heff)$ with $\mu\leq\Lambda-\delta$ one has
\[
\dist\bigl(\mu,\sigma(H)\bigr) \leq C \eps^{N+1}
\]
for all $0<\eps < \eps_0$.
\end{Corollary}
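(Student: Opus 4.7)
The approach is the classical quasi-mode / Weyl-sequence argument: for each $\mu\in\sigma(\Heff)$ with $\mu\le\Lambda-\delta$ I would manufacture a unit vector $\psi\in\dom(H)$ satisfying $\|(H-\mu)\psi\|=\cO(\eps^{N+1})$; the spectral theorem for the self-adjoint operator $H$ then yields $\dist(\mu,\sigma(H))\le C\eps^{N+1}$ immediately.

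The natural quasi-mode comes from transporting an approximate eigenvector of $\Heff$ through $U_\eps$. For a parameter $\alpha\in(0,\delta/2)$, use the spectral theorem for $\Heff$ to pick $\phi$ in the range of $\mathbf{1}_{[\mu-\alpha,\mu+\alpha]}(\Heff)$ with $\|\phi\|=1$; then $\|(\Heff-\mu)\phi\|\le\alpha$ and $\phi$ has $\Heff$-spectral support in $(-\infty,\Lambda-\delta/2]$. Set $\psi:=U_\eps\phi\in P_\eps\dom(H)$. Unitarity of $U_\eps$, combined with $\Heff=U_\eps^*P_\eps HP_\eps U_\eps$ and $P_\eps\psi=\psi$, produces the decomposition
\[
(H-\mu)\psi \;=\; U_\eps(\Heff-\mu)\phi \;+\; P_\eps^\bot[H,P_\eps]\psi,
\]
whose first summand has norm at most $\alpha$. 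Thus everything reduces to establishing $\|[H,P_\eps]\psi\|=\cO(\eps^{N+1})$ uniformly in $\phi$; letting $\alpha\to 0$ then completes the argument.

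The main obstacle is that Theorem~\ref{thm:Peps} controls $[H,P_\eps]\varrho(H)$ only for $\varrho$ supported in $(-\infty,\Lambda]$, whereas the spectral localization of $\psi$ is a priori with respect to $P_\eps HP_\eps$ rather than $H$. To bridge this gap, pick a bounded Borel $\varrho{:}\,\RR\to[0,1]$ equal to one on $(-\infty,\Lambda-\delta/2]$ and supported in $(-\infty,\Lambda]$. The intertwining $U_\eps f(\Heff)=f(P_\eps HP_\eps\vert_{P_\eps\cH})U_\eps$ forces $\varrho(P_\eps HP_\eps)\psi=\psi$, so the key estimate is that replacing $P_\eps HP_\eps$ by $H$ inside $\varrho$ costs only $\cO(\eps^{N+1})$. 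A Helffer--Sj\"ostrand representation applied to a smooth variant of $\varrho$, together with the identity $(z-H)^{-1}-(z-P_\eps HP_\eps)^{-1}=(z-H)^{-1}[H,P_\eps](z-P_\eps HP_\eps)^{-1}$ on $P_\eps\cH$, converts the operator difference into $z$-integrals whose integrand is $[H,P_\eps]\tilde\varrho(H)=\cO(\eps^{N+1})$ by Theorem~\ref{thm:Peps}.

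This yields $\|(\id-\varrho(H))\psi\|=\cO(\eps^{N+1})$, which a short bootstrap upgrades to a $\dom(H)$-norm bound. Splitting $[H,P_\eps]\psi=[H,P_\eps]\varrho(H)\psi+[H,P_\eps](\id-\varrho(H))\psi$, Theorem~\ref{thm:Peps} controls the first summand directly, while the a priori estimate $\|[H,P_\eps]\|_{\cL(\dom(H),\cH)}=\cO(\eps)$ applied to the small $\dom(H)$-piece bounds the second. Combining gives $\|(H-\mu)\psi\|\le\alpha+C\eps^{N+1}$ with $C$ independent of $\alpha$, and the spectral theorem for $H$ closes the argument upon sending $\alpha\to 0$.
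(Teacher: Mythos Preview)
Your overall strategy is exactly the one the paper indicates (and leaves to~\cite{Lampart2016}): build a quasi-mode for $H$ from a Weyl vector for $\Heff$ via $U_\eps$, and identify the transfer of spectral localisation from $\Heff$ to $H$ as the only nontrivial step. Up to and including the decomposition $(H-\mu)\psi=U_\eps(\Heff-\mu)\phi+P_\eps^\bot[H,P_\eps]\psi$ and the splitting $[H,P_\eps]\psi=[H,P_\eps]\varrho(H)\psi+[H,P_\eps](\id-\varrho(H))\psi$, everything is clean and matches the intended argument.

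The genuine gap is in your Helffer--Sj\"ostrand step. The integrand coming from the resolvent identity is $(z-H)^{-1}[H,P_\eps](z-\tilde H)^{-1}$, with $\tilde H=P_\eps H P_\eps$, and this is \emph{not} $[H,P_\eps]\tilde\varrho(H)$ in any direct sense. The commutator sits between a resolvent of $H$ and a resolvent of $\tilde H$, and Theorem~\ref{thm:Peps} gives you control only after composing with a spectral cutoff \emph{in $H$}. Using merely $\|[H,P_\eps]\|_{\cL(\dom(H),\cH)}=\cO(\eps)$ together with resolvent bounds produces $\|(\id-\varrho(H))\psi\|=\cO(\eps)$, not $\cO(\eps^{N+1})$; plugging this into your final splitting then yields only $\cO(\eps^2)$. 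To actually reach $\cO(\eps^{N+1})$ you need an extra ingredient: insert a second cutoff $\varrho_1(H)$ (with $\varrho_1\equiv 1$ on $\supp\varrho$ and $\supp\varrho_1\subset(-\infty,\Lambda]$) on \emph{both} sides of $[H,P_\eps]$, use Theorem~\ref{thm:Peps} and its adjoint for the terms containing at least one $\varrho_1(H)$, and observe that the remaining term $(z-H)^{-1}(1-\varrho_1(H))[H,P_\eps](1-\varrho_1(H))(z-H)^{-1}$ is analytic in $z$ on a neighbourhood of $\supp\tilde\varrho$ and hence integrates to zero. Alternatively, one can iterate the resolvent identity to produce $N+1$ copies of $[H,P_\eps]$. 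Either route is standard in the adiabatic literature, but one of them must be spelled out; as written, your sentence ``whose integrand is $[H,P_\eps]\tilde\varrho(H)$'' does not follow.
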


Conversely, we cannot expect to always find spectrum of $\Heff$ near that of $H$. For instance, if the spectrum of $H^\cF$ consists solely of separated bands $\{\lambda_j\}_{j\in\NN}$, then their projections $P_0^j$ give an orthogonal decomposition $\id_\cH = \oplus_{j\in\NN} P_0^j$. By Theorem~\ref{thm:Peps}, $H$ is almost diagonal with respect to this decomposition, so its spectrum is approximated by the union of the spectra of the effective operators. For a given $\mu\in \sigma(H)$ we do not know a priori to which of these sets it is close.

If, however, $\psi\in \dom(H)$ has energy $\scpro{\psi}{H\psi}_\cH\leq \Lambda$ for some $\Lambda\in\RR$ and $-\eps^2 \Delta^\cE_\sH + \eps H_1\geq - C \eps$ is bounded below, 
only finitely many spectral projections $P_0^j$, namely those associated with eigenbands with  $\inf_{x\in B} \lambda_j(x)<\Lambda$, contribute significantly to~$\psi$, because
\[
\Lambda \geq \scpro{\psi}{H\psi}_\cH = \sum_{j\in \NN}\bscpro{\psi}{(-\eps^2\Delta^\cE_\sH + \eps H_1 + \lambda_j)P_0^j \psi}_\cH\geq \inf\lambda_j + \cO(\eps).
\]
If we choose $\Lambda$ small enough, only the ground state band $\lambda_0(x):= \min \sigma\bigl(H^\cF(x)\bigr)$ should contribute and we do expect mutual approximation of the spectra.
In fact, for energies below $\Lambda_1:=\inf_{x\in B}(\sigma(H^\cF)\backslash \lambda_0)$ the operators $H$ and $\Heff$ are almost unitarily equivalent.

\begin{Corollary} \label{cor:equiv}
Let $\lambda_0(x):=\inf \sigma(H^\cF(x))$ be the ground state band. Suppose this satisfies Condition~\ref{cond:gap} and let $\Heff$ be the effective operator~\eqref{eq:Heff} for given constants $N\in \NN$ and $\Lambda>0$.
Assume that $-\eps^2 \Delta^\cE_\sH + \eps H_1$ is bounded from below by $-C\eps$ for some constant $C>0$. 
Then, for every cut-off function  $\chi\in C^\infty_0(\RR)$, with $\chi^p\in C^\infty_0(\RR)$ for all $p\in(0,\infty)$ and support in $(-\infty,\Lambda_1)$, we have
\begin{equation*}
 \norm{U_\eps^* H\chi(H) U_\eps - \Heff\chi(\Heff)}_{\cL(\cH)}=\mathcal{O}(\eps^{N+1}).
\end{equation*}
\end{Corollary}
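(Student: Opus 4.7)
The plan is to follow the standard space-adiabatic strategy: approximately block-diagonalise $H$ with respect to $P_\eps$, then use the spectral gap below $\Lambda_1$ to eliminate the off-diagonal block on the spectral range of $\chi$. Set $g(x):=x\chi(x)$, so $g(0)=0$. The intertwining $U_\eps P_0 = P_\eps U_\eps$ and unitarity of $U_\eps$ give $\Heff\chi(\Heff) = U_\eps^*\, g(P_\eps H P_\eps)\, U_\eps$ (where $P_\eps H P_\eps$ is viewed as a self-adjoint operator on $\cH$, vanishing on $P_\eps^\bot\cH$). By unitarity of $U_\eps$ the task thus reduces to proving
\[
\snorm{g(H) - g(P_\eps H P_\eps)}_{\cL(\cH)} = \mathcal{O}(\eps^{N+1}).
\]

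The first step is a lower bound for the excited block. For $\psi\in P_\eps^\bot\dom(H)$ the identity $P_0\psi = P_0(P_\eps^\bot - P_0^\bot)\psi$ together with $P_\eps - P_0 = \mathcal{O}(\eps)$ in $\cL(\cH)$ (Theorem~\ref{thm:Peps}) yields $\snorm{P_0\psi}=\mathcal{O}(\eps)\snorm{\psi}$; combining the hypothesis $-\eps^2\Delta^\cE_\sH + \eps H_1\geq -C\eps$ with the fibrewise bound $H^\cF P_0^\bot\geq \Lambda_1 P_0^\bot$ (from Condition~\ref{cond:gap}) gives $\scpro{\psi}{H\psi}_\cH\geq (\Lambda_1-C'\eps)\snorm{\psi}^2$, whence $\sigma(P_\eps^\bot H P_\eps^\bot|_{P_\eps^\bot\cH})\subset[\Lambda_1-C'\eps,\infty)$. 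Since $\supp\chi\subset(-\infty,\Lambda_1-\delta]$ for some $\delta>0$, for small $\eps$ one has $g(P_\eps^\bot H P_\eps^\bot|_{P_\eps^\bot\cH}) = 0$, so $g(A) = g(P_\eps H P_\eps)$ where $A := P_\eps H P_\eps + P_\eps^\bot H P_\eps^\bot$. The target then reduces to the block-diagonal approximation $\snorm{g(H) - g(A)}_{\cL(\cH)} = \mathcal{O}(\eps^{N+1})$.

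For this I would apply the Helffer--Sjöstrand formula with a compactly supported almost-analytic extension $\tilde g$ concentrated near $\supp g$:
\[
g(H) - g(A) = \frac{1}{\pi}\int_\CC \partial_{\bar z}\tilde g(z)\,(A-z)^{-1}H_{\mathrm{OD}}(H-z)^{-1}\,\dd^2 z,
\]
where $H_{\mathrm{OD}}=H-A=(1-2P_\eps)[H,P_\eps]$. Introducing an auxiliary cutoff $\eta\in C_0^\infty((-\infty,\Lambda])$ with $\eta\equiv 1$ on a neighbourhood of $\supp\tilde g\cap\RR$ (possibly enlarging $\Lambda$) and splitting $(H-z)^{-1}=\eta(H)(H-z)^{-1}+(1-\eta(H))(H-z)^{-1}$, the $\eta(H)$-contribution is directly $\mathcal{O}(\eps^{N+1})$ by Theorem~\ref{thm:Peps} applied to $H_{\mathrm{OD}}\eta(H)=(1-2P_\eps)[H,P_\eps]\eta(H)$, combined with the uniform resolvent bound $\snorm{(A-z)^{-1}}\leq \max(|\Im z|^{-1},2/\delta)$ on $\supp\tilde g$ (the latter from the excited-block estimate of the previous step) and the rapid decay of $\partial_{\bar z}\tilde g$. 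For the $(1-\eta(H))$-contribution, the factor $\tilde f_z(H):=(1-\eta(H))(H-z)^{-1}$ is holomorphic in $z$ on a complex neighbourhood of $\supp\tilde g$; paired with the analytic $P_\eps^\bot(A-z)^{-1}$-component it vanishes by Stokes' theorem, while the remaining $P_\eps(A-z)^{-1}$-part is handled by rewriting $P_\eps[H,P_\eps]=-P_\eps H P_\eps^\bot$ and reinserting $\eta(H)$ to bootstrap via the super-smoothness hypothesis $\chi^p\in C_0^\infty(\RR)$ for all $p>0$.

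The main obstacle is this final bootstrap: controlling the contribution of the non-analytic $P_\eps(A-z)^{-1}$-block against the high-energy remainder $(1-\eta(H))(H-z)^{-1}$, where the super-adiabatic cancellation does not apply directly and where the hypothesis on all fractional powers of $\chi$ becomes essential. This is the core technical step handled in~\cite{Lampart2016}; once it is settled, conjugating $g(H)=g(P_\eps H P_\eps)+\mathcal{O}(\eps^{N+1})$ by the unitary $U_\eps$ yields the claim.
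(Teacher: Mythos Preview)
Your outline is correct and matches the approach the paper adopts: the paper gives no self-contained proof of this corollary, but explicitly defers to~\cite{Lampart2016}, and your sketch follows precisely that route (intertwining reduction to $g(H)-g(P_\eps H P_\eps)$, spectral lower bound on the excited block $P_\eps^\bot H P_\eps^\bot$, block-diagonalisation via Helffer--Sj\"ostrand, and the final bootstrap exploiting the $\chi^p\in C^\infty_0$ hypothesis). Your identification of the remaining technical obstacle --- controlling the non-analytic $P_\eps(A-z)^{-1}$ block against the high-energy tail --- is accurate, and your deferral of this step to~\cite{Lampart2016} mirrors exactly what the paper itself does.
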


To gain a better intuition for the effective operator, it is useful to study its expansion in powers of $\eps$.
Using that $U_\eps=P_0+\mathcal{O}(\eps)$, the effective operator
\begin{equation}\label{eq:Heff_exp}
\Heff = U_\eps^* P_\eps H P_\eps U_\eps = P_0 H P_0 + H_\mathrm{sa}
\end{equation}
can be expanded as an adiabatic operator $\Ha$ and a remainder $H_\mathrm{sa}$, which incorporates the super-adiabatic corrections. It follows from the properties of $U_\eps$ that the remainder is of order $\eps^2$ in $\cL(W^2(\cP),L^2(\cP))$.
A formal expansion of $H_\mathrm{sa}$ can be obtained from the explicit construction of $P_\eps$ (see Lemma~\ref{lem:expPeps}). The leading part is given by

\[
\cM := P_0 [P_0,H] R^\cF(\lambda) [P_0,H] P_0,
\]
where $R^\cF(\lambda):=(H^\cF-\lambda)^{-1} P_0^\bot$ stands for the reduced resolvent associated with the eigenband $\lambda$. 
If $H_1$ is a second-order differential operator, the action of $\cM$ may involve horizontal derivatives of fourth order, and then does not define a bounded operator from $W^2(\cP)$ to $L^2(\cP)$. To make the expansion rigorous, we thus need to regularise this expression.
The precise statement is that
\[
\norm{\Heff\chi^2(\Heff) - \chi(\Heff)\bigl(\Ha + \cM\bigr)\chi(\Heff)}_{\cL(L^2(\cP))} = \cO(\eps^3).
\]
for an appropriate cut-off function $\chi\in C^\infty_0$ with support in $(-\infty,\Lambda]$ (see~\cite[Prop.~4.10]{Haag2016a}).

An important example of a Laplace-type operator on a non-trivial vector bundle is the Hodge Laplacian on differential forms. We now explain how to treat its adiabatic limit in our setting. This requires writing the Laplacian using a connection, rather than the (co-) differentials $\dd$ and $\delta_\eps$; see~\cite{Lopez2000} for an exposition in that formalism.
\begin{Example}\label{ex:Hodge}
Let $\cE:=\Lambda^p \sT^*M\otimes \CC$ be the bundle of complex valued $p$-forms on~$M$. This comes naturally with a bundle metric $G_\eps=g_\eps^{\otimes p}$ and an ($\eps$-dependent) connection $\nabla^{\mathrm{LC},p}$ induced by the Levi-Cività connection of $g_\eps$. The Hodge Laplacian (w.r.t.~$g_\eps$) on $\cE$ can be written as
\begin{equation*}
 \delta_\eps\, \dd+ \dd\,\delta_\eps=-\tr_{g_\eps}(\nabla^{\mathrm{LC},p})^2 + W,
\end{equation*}
where the action of the potential $W\in C^\infty(\End(\cE))$ is determined by the curvature tensor $\cR^p$ of $\nabla^{\mathrm{LC},p}$ via
\begin{equation}\label{eq:Weitzenbck}
 (W \omega)(X_1, \dots, X_p)=\sum_{j=1}^p \tr_{g_\eps} \big(\cR^p(\cdot, X_j)\omega\big)(X_1,\dots, X_{j-1}, \cdot, X_{j+1}, \dots, X_p),
\end{equation}
for vector fields $X_1, \dots, X_p$ on $M$.
Note that the potential $W$ depends on $\eps$ through the $g_\eps$-trace and the $\eps$-dependence of $\cR^p$, which is associated with $G_\eps$. Recall also that $(\nabla^{\mathrm{LC},p})^2$ is defined by
\begin{equation*}
 (\nabla^{\mathrm{LC},p})^2(X,Y) = \nabla^{\mathrm{LC},p}_X \nabla^{\mathrm{LC},p}_Y - \nabla^{\mathrm{LC},p}_{\nabla^{g_\eps}_X Y}.
\end{equation*}

In order to fit this into our general setting, we want to choose an $\eps$-independent bundle metric $h$. To do this, we rescale horizontal forms by appropriate factors. Let $P_\sV$ and $P_\sH$ denote the orthogonal projections to the vertical and horizontal bundles $\sV M$, $\sH M$, respectively. Define a map 
\begin{equation*}
 \theta{:}\, \sT M \to \sT M, \, \qquad X\mapsto P_\sV X + \eps^{-1} P_\sH X.
\end{equation*}
 Obviously, $\theta$ is an isometry from $(\sT M, g_\eps)$ to $(\sT M, g)$ (with $\eps=1$). Now choose $h=g^{\otimes p}=G_{\eps=1}$ and let $\theta^p$ be the induced isometry $(\cE, G_\eps)\to (\cE, h)$ (note that $\theta^1=(\theta^T)^{-1}$ maps $\pi_M^*\nu$ to $\eps \pi_M^*\nu$ for $\nu\in C^\infty(\sT^*B\otimes \CC)$). Then
$\nabla^{p,\eps}:=\theta^p \nabla^{\mathrm{LC},p} (\theta^p)^{-1}$ is an $\eps$-dependent family of metric connections on $(\cE, h)$.
The action of these connections is determined by the action of $\theta\nabla^{\mathrm{LC}}\theta^{-1}=:\nabla^\eps$ on vector fields. This can be described as follows: Let $Y_1, Y_2$ be vertical vector fields, $X_1, X_2$ vector fields on $B$, and denote their horizontal lifts by $X_1^*, X_2^*$ (these are the unique horizontal vector fields satisfying $\sT \pi_M X_i^*=X_i$).
Let $\nabla^\sV$ be the Levi-Cività connection of the vertical metric, $W_F{:}\, C^\infty(\sH M)\to C^\infty(\End(\sV M))$ be the Weingarten map of the fibres (for $\eps=1$) and let $\Omega{:}\,C^\infty(\sV M)\to C^\infty(\End(\sH M))$ be the integrability tensor of the horizontal distribution defined by
\begin{equation*}
 g\bigl(\Omega(Y_1)X_1^*,X_2^*\bigr)=g\bigl(Y_1, [X_1^*, X_2^*]\bigr).
\end{equation*}
Using the Koszul formula, one calculates
\begin{align*}
\nabla^\eps_{Y_1} Y_2&= \nabla^{\sV}_{Y_1} Y_2  + \eps^2 P_\sH \nabla^{\eps=1}_{Y_1} Y_2 \\
\nabla^\eps_{X_1^*} X_2^*&= (\nabla^{g_B}_{X_1} X_2)^* + \eps P_\sV [X_1^*, X_2^*] \\
\nabla^\eps_{X_1^*} Y_1&= P_\sV[X_1^*, Y_1] + W_F(X_1^*)Y_1 - \tfrac12 \eps \Omega(Y_1)X_1^*\\
\nabla^\eps_{Y_1} X_1^*&= \eps W_F(X_1^*)Y_1 - \tfrac12 \eps^2 \Omega(Y_1)X_1^*.
\end{align*}
Note that this connection is torsion-free only if $\eps=1$ (or if $W_F=\Omega=0$, in which case the connection is independent of $\eps$). From this we see that $\nabla^{p,\eps}$ depends smoothly on $\eps$ at $\eps=0$ and we may set $\nabla^\cE=\nabla^{p,0}$.
We then have
\begin{equation*}
 \theta^p\tr_{g_\eps}(\nabla^{\mathrm{LC},p})^2(\theta^p)^{-1}=
 \tr_{g_\eps}(\nabla^{p,\eps})^2
 =\tr_{g_\eps}(\nabla^\cE)^2+ \eps \widetilde{H},
\end{equation*}
where 
$\eps \widetilde{H}$ collects all $\eps$-dependent terms of $\nabla^{p,\eps}$. 
Furthermore,
\begin{align*}
 \tr_{g_\eps}(\nabla^\cE)^2&=\tr_{g_\eps}\left(\nabla^\cE_{\cdot}\nabla^\cE_{\cdot} - \nabla^\cE_{\nabla^{\mathrm{LC}}_\cdot \cdot}\right)\\
 &=\tr_{g_\sV} \left(\nabla^\cE_{\cdot}\nabla^\cE_{\cdot} - \nabla^\cE_{\nabla^\sV_\cdot \cdot}\right)
 + \eps^2 \tr_{\pi_M^*g_B} \left(\nabla^\cE_{\cdot}\nabla^\cE_{\cdot} - \nabla^\cE_{\nabla^{g_B}_\cdot \cdot}\right)
 -\eps^2 \nabla^\cE_{\eta_\sV}\\
 &=: \Delta_\sV + \eps^2 \Delta_\sH^\cE,
\end{align*}
with the mean curvature vector of the fibres $\eta_\sV = \tr_{g_\sV}(P_\sH \nabla^{\eps=1}_{\cdot} \cdot)$.
We then set
\begin{align*}
H:=\theta^p(\delta_\eps\, \dd+ \dd\,\delta_\eps)(\theta^p)^{-1}
%
=-\eps^2 \Delta_\sH^\cE - \Delta_\sV^\cE + \eps \widetilde{H} + \theta^pW(\theta^p)^{-1},
\end{align*}
which is of the form we suppose in general.

One can additionally choose $V=\theta^p W(\theta^p)^{-1}\vert_{\eps=0}$, which leads to $\theta^pW(\theta^p)^{-1}=V + \eps \widetilde{V}$, and
\begin{equation*}
 H^\cF:= -\Delta_\sV^\cE + V ,\qquad H_1=\widetilde{H}+ \widetilde{V}.
\end{equation*}
With these choices, we have
\begin{equation*}
 H=-\eps^2 \Delta_\sH^\cE + \eps H_1 + H^\cF.
\end{equation*}

One can check that (note that the trace over horizontal directions yields a pre-factor $\eps^2$; see also~\cite{Lopez2000})
the action of the vertical operator $H^\cF$ on $\omega \wedge \pi_M^*\nu$ with $\omega \in C^\infty\left(\Lambda^q (\sV M)^*\otimes \CC\right)$ and $\nu\in C^\infty\left(\Lambda^{p-q}\sT^*B\otimes \CC\right)$ is just that of the Hodge Laplacian on $q$-forms of the fibre (multiplied by the the identity on $\pi_M^*\Lambda^{p-q}\sT^*B$).
When $\partial M=\emptyset=\partial F$, an example of an eigenband is given by $\lambda(x)\equiv 0$. 
This is the ground state band since $H^\cF= (\dd_F + \dd_F^*)^2$ is non-negative.
It satisfies the gap condition under our boundedness assumptions on the geometry of $M\xrightarrow{\pi_M}B$, explained in Section~\ref{sect:bg}.
The range of the associated projection $P_0(x)$ is given by
\begin{equation*}
 \mathrm{ran}\bigl(P_0(x)\bigr)= \bigoplus_{q=0}^p \mathscr{H}^q(M_x,\CC)\otimes \left(\Lambda^{p-q} T_x^*B\right),
\end{equation*}
where $\mathscr{H}^q(M_x,\CC)$ is the $q$-th de Rham cohomology of $M_x$ with values in $\CC$. The eigenspace bundle $\cP$ is thus given by differential forms on the base with values in the cohomology of the fibres.
By Corollary~\ref{cor:spect}, the effective operator $\Heff$ acting on $L^2$-sections of $\cP$ can be used to find small (approximate) eigenvalues of $H$ with arbitrary precision. For compact $M$ these are related to the Leray spectral sequence for the de Rham cohomology of $M$, see~\cite{Mazzeo1990}.

We remark that while this choice of vertical operator is certainly natural it may sometimes be convenient to add some of the terms in $H_1$, in particular those containing vertical derivatives, to $H^\cF$. This will lead to $\eps$-dependent spectral projections and eigenvalues but this dependence can easily be treated perturbatively, see Remark~\ref{rem:H_F pert}.
In the example at hand, adding all the terms in $H_1$ containing vertical derivatives to $H^\cF$ leads to an operator $-\eps^2\Delta_\sH + \eps H_1$ which satisfies the conditions of Corollary~\ref{cor:equiv}. The small eigenvalues of $H$ are thus exactly those of the effective operator constructed in this way.

We emphasise that when $\partial M\neq\emptyset$ we only treat Dirichlet conditions and not the, perhaps more natural, mixed relative/absolute boundary conditions. This limitation stems from the fact that, in order to view $H^\cF(x)$ as a section of $\cL(\cD_\cF,\cH_\cF)$, we need the boundary condition to be fibrewise. This is of course not the case for Neumann conditions when the normal to $\partial M_x$ with respect to $g_\sV$ is different from that for $\partial M$ with respect to $g_\eps$. Note, however, that these normals become equal in the limit $\eps\to0$, so a generalisation of our methods to such boundary conditions seems possible.
\end{Example}

\section{Bounded geometry}\label{sect:bg}

In order to set up our analysis, we will need a good notion of bounded geometry for the involved objects. We will assume here the basic notions for manifolds without boundary and vector bundles over such manifolds. A detailed discussion of these is given by Eichhorn~\cite{eichhorn07}.

We will use the following, coordinate-independent, definition of bounded tensors.

\begin{Definition}\label{def:bound_tensor}
 Let $(\cE,h,\nabla^\cE)\to (M,g)$ be a vector bundle with bundle metric $h$ and compatible connection $\nabla^\cE$. 
 Equip the bundles $\sT^*M^{\otimes j}\otimes \cE$ with the metric $G$ induced by $g$ and $h$ and the connection $\nabla^j$ induced by the Levi-Cività connection on $\sT M$ and $\nabla^\cE$ on $\cE$.
 A smooth section $\sigma \in C^\infty(\cE)$ is $C^k$-bounded if there is a constant $C(k)$ such that for all $j\in\{0,\dots, k\}$
 \begin{equation*}
  \sup_{p\in M} G_j(\nabla^j \sigma_p, \nabla^j \sigma_p) \leq C(k)\,.
 \end{equation*}
 The set of $C^\infty$-bounded sections, denoted by $C^\infty_\mathrm{b}(\cE)$, is the set of sections that are $C^k$-bounded for all $k\in \NN$.
\end{Definition}

Any definition of bounded geometry for a certain class of objects brings with it the existence of specific normal coordinates. For complete manifolds these are geodesic coordinates around appropriately selected points. For a vector bundle over such a manifold, endowed with a bundle metric and compatible connection, the normal coordinates are given by trivialisations over geodesic coordinate charts, obtained by parallel transport of a given frame along the geodesics.
For manifolds with boundary a concept of bounded geometry was introduced by Schick~\cite{Schick01}.

\begin{Definition}\label{def:delta-bg}
 A Riemannian manifold $(M,g)$ with boundary $\partial M$ is a \textit{$\partial$-manifold of bounded geometry} if the following hold:
 \begin{itemize}
\item \textit{Normal collar}: Let $\nu$ be the inward pointing unit normal of $\partial M$. There exists $r_c>0$ such that the map 
\begin{equation*}
b{:}\,\partial M \times [0,r_c) \to M\,, \qquad (p,t)\mapsto \exp_p(t \nu)
\end{equation*}
is a diffeomorphism to its range.
\item \textit{Injectivity radius of the boundary}: The injectivity radius of $\partial M$ with the induced metric is positive, $r_\mathrm{inj}(\partial M, g\vert_{\partial M})>0$.
\item \textit{Injectivity radius in the interior}: There is $r_{\mathrm{int}}>0$ such that for all $r<r_{\mathrm{int}}$ and $p\in M$ with $\dist(p,\partial M) > r_c/3$ the exponential map restricted to $\mathsf{B}_r(0)\subset T_p M$ is a diffeomorphism to its range.  
\item \textit{Curvature bounds}: The curvature tensor of $M$ and the second fundamental form $S$ of $\partial M$ are $C^\infty$-bounded tensors on $M$ and $\partial M$ respectively, in the sense of Definition~\ref{def:bound_tensor}.
\end{itemize}
\end{Definition}

The normal coordinate charts associated with such a manifold are given by geodesic coordinate charts around points in the interior with $\dist(p,\partial M)>r_c/3$ and, near the boundary, by the composition of a geodesic chart in $\partial M$ with the boundary collar map $b$ defined above.

The following definitions provide the notions of vector bundles of bounded geometry over manifolds with boundary and their natural trivialisations.
\begin{Definition}
A vector bundle with metric connection $(\cE,h,\nabla^\cE)\to (M,g)$ over a Riemannian manifold with boundary is of bounded geometry if $(M,g)$ is of $\partial$-bounded geometry and the curvature tensor $\cR^\cE$ associated to $\nabla^\cE$ and all its covariant derivatives are bounded with respect to the metrics induced by $h$ and $g$.   
\end{Definition}

\begin{Definition}\label{def:adm_frame}
 Let $(\cE,h,\nabla^\cE)\to (M,g)$ be a $\CC^n$-vector bundle of bounded geometry, $U\subset M$ open and $\tau{:}\,\cE\vert_U\to U\times \CC^n$ be a local trivialisation. We call $\tau$ admissible if there is a normal coordinate chart $\kappa{:}\,U\to \RR^m$ and $\tau$ coincides with a trivialisation obtained by parallel transport of an $h$-orthonormal frame along radial geodesics, if $\kappa$ is an interior chart, or geodesics in $\partial M$ composed with geodesics orthogonal to the boundary, if $\kappa$ is a boundary chart. 
\end{Definition}

If $\cE\to M$ is of bounded geometry and we choose normal coordinates, the metric $g$ and the connections $\nabla^\cE$ are expressed by smooth and globally bounded functions. Hence, in this case, Definition~\ref{def:bound_tensor} is equivalent to requiring that the section $\sigma$, expressed in these charts, be smooth and bounded, with global bounds.

In our setting the base manifold $B$ has no boundary and we assume that $(B,g_B)$ is of bounded geometry in the usual sense. The typical fibre $F$ of $M$ is compact, so it is of $\partial$-bounded geometry for any smooth metric. For the total space $M$ we will assume a uniformity condition on its structure, \iec its local trivialisations. 

\begin{Definition}\label{def:uni-triv}
 Let  $(M,g)\xrightarrow{\pi_M}(B,g_B)$ be a Riemannian fibre bundle over a manifold of bounded geometry $B$ with typical fibre $F$.
 The structure $(M,g,\pi_M)$ is said to be \emph{uniformly locally trivial} if there exists a metric $g_F$ on $F$ such that for all $r<r_\mathrm{inj}(B,g_B)$ and $x\in B$ there is a local trivialisation
\begin{equation*}
 \Phi{:}\,\bigl(\pi_M^{-1}(\mathsf{B}_r(x)),g\bigr)\to \bigl(\mathsf{B}_r(x)\times F,g_B\times g_F\bigr)
\end{equation*}
 such that
 \begin{align*}
  &\sT\Phi\in C^\infty_\mathrm{b}\left(\sT^*M\vert_{\pi^{-1}_M(\mathsf{B}_r(x))}\otimes \Phi^{*}\sT(\mathsf{B}_r(x)\times F)\right)\\
  &\sT(\Phi^{-1}) \in C^\infty_\mathrm{b}\left(\sT^*(\mathsf{B}_r(x)\times F)\otimes (\Phi^{-1})^*\sT M\vert_{\pi^{-1}_M(\mathsf{B}_r(x))}\right)
 \end{align*}
are bounded tensors, uniformly in $x$.
\end{Definition}
A detailed discussion of this concept can be found in ~\cite{Lampart2014}. In particular, this property implies that $(M,g)$ is of $\partial$-bounded geometry, as defined above.

\begin{Remark}
 A related concept is that of a foliation of bounded geometry, introduced by Sanguiao~\cite{Sanguiao2008} and given a coordinate-free form by \'{A}lvarez Lopez, Kordyukov and Leichtnam~\cite{LoKoLe2014}. 
 When $\partial M=\emptyset= \partial F $, Definition~\ref{def:uni-triv} implies that $(M,g)$ with its foliation into fibres is a foliation of bounded geometry in the sense of~\cite{Sanguiao2008, LoKoLe2014}. The trivialisations $\Phi$ of $M$ provide suitable coordinate systems of product form, whose size can be estimated by the uniform lower bound on the injectivity radius of $M_x$ (with the induced metric) given by~\cite[Lem. A.7]{Lampart2014}.
 However, bounded geometry in the sense of~\cite{Sanguiao2008, LoKoLe2014} does not imply uniform local trivialisability, because the choice of a fixed reference metric $g_F$ on $F$ gives global (upper) bounds on quantities such as the volume of $M_x$, whereas the purely local definition of~\cite{Sanguiao2008, LoKoLe2014} cannot achieve this.
\end{Remark}

With all the necessary definitions at hand, we can now summarise our conditions on the geometry:

\begin{Condition} \label{cond:geometry}
We require that
\begin{enumerate}
\item $(B,g_B)$\label{glo:B2} is a complete, connected manifold of bounded geometry,
\item $F$ is a compact manifold with boundary,
\item $\pi_M{:}\,(M,g)\to (B,g_B)$ is a uniformly locally trivial fibre bundle\label{glo:M2} with fibre~$F$, 
\item $\pi_\mathcal{E}{:}\,(\mathcal{E},h, \nabla^\cE)\to (M,g)$ is a vector bundle of bounded geometry.
\end{enumerate}
\end{Condition}

Together, these hypothesis imply additional boundedness properties for $(M,g_\eps)$ and the bundle $\Pi_\cE {:}\, \cE\to B$ that will play an important role in our analysis. In the following, Proposition~\ref{prop:bg_M} gives $\eps$-uniform bounded geometry of the total space $(M,g_\eps)$ and Proposition~\ref{prop:bound_cE} shows that the induced bundle $\Pi_\cE{:}\,\cE\to B$ inherits a form of bounded geometry.

\begin{Proposition}\label{prop:bg_M}
Under the hypothesis i)--iii) of Condition~\ref{cond:geometry}, $(M,g)$ is a $\partial$-manifold of bounded geometry in the sense of Definition~\ref{def:delta-bg}.
Furthermore, $(M, g_\eps)$ is of $\partial$-bounded geometry for every $0<\eps\leq 1$ and the constants $r_c$, $r_\mathrm{inj}(\partial M)$, $r_{\mathrm{int}}$ and $C(k)$, $k\in \NN$ can be chosen as those of $(M,g)$.
\end{Proposition}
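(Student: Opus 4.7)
The plan is to exploit the uniform local triviality of Condition~\ref{cond:geometry}(iii) to transfer $\partial$-bounded geometry from a product model to $(M,g_\eps)$. The $\eps=1$ case yields the first assertion, and tracking the $\eps$-dependence throughout yields the uniform version.

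First I would verify that the product $(\mathsf{B}_r(x)\times F,\eps^{-2}g_B\times g_F)$ is of $\partial$-bounded geometry uniformly in $x\in B$ and $\eps\in(0,1]$. Compactness of $F$ trivially gives such bounds for $(F,g_F)$. The homothety $g_B\mapsto\eps^{-2}g_B$ multiplies the injectivity radius of $B$ by $\eps^{-1}\geq 1$ and the pointwise norm of the $k$-th covariant derivative of the Riemann tensor by $\eps^{k+2}\leq 1$, so $(B,\eps^{-2}g_B)$ has bounded geometry with constants no worse than those of $(B,g_B)$. The product is then $\partial$-bounded with normal collar inherited from $\partial F\hookrightarrow F$, and all constants independent of $\eps$.

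Next I would transfer these bounds to $M$ via the trivialisation $\Phi$. Since $\pi_B\circ\Phi=\pi_M$ is a Riemannian submersion for $g$, the differential $d\Phi$ sends $\sV M$ into $0\oplus\sT F$ and satisfies $|P_\sH X|_g=|d\pi_M X|_{g_B}$ for all $X\in\sT M$. Writing $d\Phi X=(d_B,d_F^\sV+d_F^\sH)$, with $d_B=d\pi_M X$, $d_F^\sV$ the image of $d\Phi P_\sV X$ and $d_F^\sH$ the fibre component of $d\Phi P_\sH X$, the $C^\infty_\mathrm{b}$-bounds on $\Phi$ and $\Phi^{-1}$ give $|d_F^\sV|_{g_F}\asymp|P_\sV X|_g$ and $|d_F^\sH|_{g_F}\lesssim|P_\sH X|_g$, uniformly in $x$. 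Combined with $|P_\sH X|^2_g\leq\eps^2|X|^2_{g_\eps}$ for $\eps\leq 1$, a direct calculation shows that $g_\eps$ and $\Phi^*(\eps^{-2}g_B\times g_F)$ are uniformly equivalent with $\eps$-independent constants; the analogous argument applied to higher derivatives of $\Phi$ gives uniform $C^\infty_\mathrm{b}$-control of the metric coefficients of $g_\eps$ in the trivialising coordinates.

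The main obstacle is precisely this uniform equivalence: one must reconcile the anisotropic rescaling that defines $g_\eps$ on $M$ with the isotropic rescaling on the $B$-factor of the product, despite the fact that $\Phi$ need not intertwine the horizontal distributions. The key point is that the spurious fibre component $d_F^\sH$, produced by the non-connection-preserving part of $\Phi$, is bounded by $|P_\sH X|_g$ and hence of size $\eps |X|_{g_\eps}$, so it contributes only $O(\eps)$ corrections that leave the equivalence intact. Once this is in place, the normal collar radius, the interior and boundary injectivity radii, and the $C^\infty_\mathrm{b}$-bounds on the curvature tensor and the second fundamental form of $\partial M$ for $(M,g_\eps)$ follow from the corresponding product-space quantities by composition with $\Phi$ and $\Phi^{-1}$, with all constants independent of $\eps$. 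Specialising to $\eps=1$ yields the first statement, and the monotonicity in $\eps$ of the product-model constants implies that the constants $r_c$, $r_\mathrm{inj}(\partial M)$, $r_{\mathrm{int}}$ and $C(k)$ chosen for $(M,g)$ already work for every $0<\eps\leq 1$.
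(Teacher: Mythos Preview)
The paper does not give its own proof of this proposition; it simply cites \cite[Prop.~A.4, A.9]{Lampart2014}. Your strategy---compare $(M,g_\eps)$ to the product model $(\mathsf{B}_r(x)\times F,\eps^{-2}g_B\times g_F)$ via the uniform trivialisations of Definition~\ref{def:uni-triv}---is exactly what that definition is designed for, and is presumably the route taken in the reference. Your identification of the main obstacle is correct: since $\Phi$ need not intertwine horizontal distributions, one has to control the fibre component $d_F^\sH$ produced by horizontal vectors, and your observation that $\lvert d_F^\sH\rvert_{g_F}\lesssim\lvert P_\sH X\rvert_g=\eps\lvert P_\sH X\rvert_{g_\eps}$ is the key point that makes the $C^0$ equivalence $\eps$-uniform.

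Two places deserve more than ``by composition with $\Phi$'':
\begin{itemize}
\item For the curvature and second-fundamental-form bounds you either need to run the $C^k$ version of your equivalence argument carefully \emph{relative to the rescaled reference metric} $\eps^{-2}g_B\times g_F$ (so that the $\eps^{-2}$ in the horizontal block does not contaminate the higher-order norms), or compute the curvature of $g_\eps$ directly via the O'Neill formulas for the submersion, where one sees explicitly that the $A$- and $T$-tensors pick up only nonnegative powers of $\eps$.
\item Injectivity radii and the normal-collar radius do not transfer under $C^0$ metric equivalence alone: you need at least $C^2$ control so that geodesics of $g_\eps$ and of the product metric stay close on a definite time interval (an ODE comparison), or alternatively invoke curvature bounds plus a uniform volume lower bound for the fibres in a Cheeger-type argument. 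Your outline implicitly assumes this step; it should be made explicit.
\end{itemize}
With these points filled in, the argument is complete and matches what the cited reference establishes.
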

A proof of this statement can be found in~\cite[Prop. A.4, A.9]{Lampart2014}.

\begin{Proposition}\label{prop:bound_cE}
Asssume Condition~\ref{cond:geometry}.
 The map $\Pi_{\cE}= \pi_M\circ \pi_\cE{:}\, \cE\to B$ is a fibre bundle whose typical fibre $\cF$ is a vector bundle over $F$.
 Moreover, there exists $r_\cE\in \left(0, r_\mathrm{inj}(B,g_B)\right)$ such that for every $x\in B$ and $r<r_\cE$ there is a trivialising bundle map
 \begin{equation*}
  \Theta{:}\, \Pi_{\cE}^{-1}(\mathsf{B}_r(x))\to \mathsf{B}_r(x) \times \cF
 \end{equation*}
with the following boundedness property:
Let $\Phi$ be as in Definition~\ref{def:uni-triv}, let 
$
 \iota_x:=\Phi^{-1}\vert_{\{x\}\times F}
$
be an embedding of the fibre and equip $\cF$ with the pulled-back metric and connection $h_\cF=\iota_x^*h_\cE$, $\nabla^\cF=\iota_x^*\nabla^\cE$. For all admissible trivialisations (Definition~\ref{def:adm_frame})
 $\alpha$ of $\cE\vert_{\pi_M^{-1}(\mathsf{B}_r(x))}$ over $U\subset M$ and $\beta$ of $\cF$ over $V\subset F$, the maps
\begin{align*}
 &\beta\circ\Theta\circ\alpha^{-1}\circ(\Phi^{-1},\id)
 \qquad\qquad\text{and} 
 & (\Phi,\id)\circ\alpha\circ\Theta^{-1}\circ\beta^{-1}
\end{align*}
are linear transformations on $\CC^n$, $C^\infty$-bounded on $\left(\Phi(U)\cap (\mathsf{B}_r(x) \times V)\right)$, uniformly in $x$.
\end{Proposition}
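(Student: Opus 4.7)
The plan is to construct $\Theta$ explicitly by horizontal-style parallel transport with respect to $\nabla^\cE$ along a family of curves in $M$ manufactured from the uniform local trivialisation $\Phi$ and the radial geodesics of $B$, and then to reduce the boundedness claim to a statement about ODEs with uniformly $C^\infty$-bounded coefficients.

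First I would fix $x\in B$, choose $r_\cE\leq r_\mathrm{inj}(B,g_B)$ (shrinking if necessary so that parallel transport is defined on a neighbourhood of the full curve), and for each $y\in\mathsf{B}_r(x)$ and $p\in F$ define the curve
\[
 c_{y,p}(t):=\Phi^{-1}\bigl(\gamma_{x\to y}(t),p\bigr),\qquad t\in[0,1],
\]
where $\gamma_{x\to y}$ is the unique $g_B$-minimising geodesic from $x$ to $y$. Let $T_{y,p}\colon\cE_{\iota_x(p)}\to\cE_{\iota_y(p)}$ denote the $\nabla^\cE$-parallel transport along $c_{y,p}$. Since $\cF:=\iota_x^*\cE$, the family $\{T_{y,p}\}_{p\in F}$ assembles into a vector bundle isomorphism $\tilde\Theta_y\colon\cF\to\cE|_{M_y}$ lying over the diffeomorphism $\Phi_y^{-1}\colon F\to M_y$. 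Setting $\Theta(e):=\bigl(y,\tilde\Theta_y^{-1}(e)\bigr)$ for $e\in\cE|_{M_y}$ gives a smooth bundle map onto $\mathsf{B}_r(x)\times\cF$, and the same formula with the family of embeddings $\iota_y$ shows that $\Pi_\cE$ is a fibre bundle with typical fibre $\cF$.

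Next I would pass to admissible trivialisations. Fix $\alpha$ over $U\subset M$ (built from parallel transport of a $h$-orthonormal frame along radial geodesics of $(M,g)$ in a normal chart $\kappa$ at some $q_0\in M$), and $\beta$ over $V\subset F$ (built analogously for $\iota_x^*\nabla^\cE$ on $(F,\iota_x^*g)$). In such frames, the Christoffel symbols of $\nabla^\cE$ and $\nabla^\cF$ are $C^\infty$-bounded tensors, uniformly over the choice of base-point of the chart, by bounded geometry of $\cE\to M$ and by the uniform local triviality of $M\to B$ (which in particular yields uniform control on $\iota_x^*g$ through $\sT\Phi$ and $\sT\Phi^{-1}$, cf.\ Proposition~\ref{prop:bg_M}). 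The matrix $M(y,q)$ representing $\beta\circ\Theta\circ\alpha^{-1}\circ(\Phi^{-1},\id)$ at the parameter point $(y,q)\in\Phi(U)\cap(\mathsf{B}_r(x)\times V)$ factors, up to constant orthonormal frame changes in each fibre, as the composition of three parallel-transport matrices: along a radial geodesic in $(F,\iota_x^*g)$ (encoded by $\beta$), along $c_{y,p}$ (encoded by $\Theta$), and along a radial geodesic in $(M,g)$ (encoded by $\alpha^{-1}$). Each of these solves a linear ODE of the form $\dot{U}(t)=-A(t)U(t)$, $U(0)=\id$, whose coefficient matrix $A$ is a contraction of the connection form with a tangent vector; by the uniformity just noted, $A$ and all its derivatives with respect to the geometric parameters are bounded by constants independent of $x$. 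Standard smooth dependence of ODE solutions on parameters then yields that $M(y,q)$ and its inverse are $C^\infty$-bounded in $(y,q)$ uniformly in $x$, which is exactly the claim; the argument for the second map is identical.

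The main obstacle I expect is bookkeeping: one must keep the three distinct metric--connection pairs $(g,\nabla^\cE)$, $(g_B,\nabla^{g_B})$ and $(\iota_x^*g,\nabla^\cF)$ aligned through the trivialisations $\Phi$ and the embeddings $\iota_x$, and verify that the parameter map $x\mapsto(\iota_x^*g,\iota_x^*\nabla^\cE)$ produces a family of geometries with constants independent of $x$. This is precisely the content of the uniform local triviality hypothesis on $(M,g,\pi_M)$ combined with bounded geometry of $\cE\to M$; once these uniform bounds on all Christoffel symbols in admissible charts are in place, the ODE comparison is routine and the proposition follows.
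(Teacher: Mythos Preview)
Your approach is correct and genuinely different from the paper's. The paper does not build $\Theta$ via $\nabla^\cE$-parallel transport along your curves $c_{y,p}$; instead it invokes Hatcher's homotopy-invariance argument for vector bundles. Concretely, the paper takes the same diffeotopy $\rho_t(y,p)=\Phi^{-1}(\gamma_{x\to y}(t),p)$ that underlies your curves, but then realises the isomorphism $\rho_1^*\cE\cong\rho_0^*\cE$ by covering $F$ with finitely many normal charts $\{V_\mu\}$, choosing a subordinate partition of unity, and patching together local isomorphisms obtained from admissible trivialisations $\tau_\mu$ of $\cE$. The boundedness of $\beta\circ\Theta\circ\alpha^{-1}\circ(\Phi^{-1},\id)$ is then obtained by writing it, near each point, as a finite string of transition functions between admissible trivialisations of $\cE$, plus one exceptional factor that compares $\beta$ with the restriction of some $\tau_\mu$ to $\cE_x$; this last factor is identified as the holonomy of $\nabla^\cE$ around a short closed loop and bounded via the curvature. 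So the paper only uses the connection at the very end.

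Your construction uses $\nabla^\cE$ from the start, which buys you two things: the patching over $F$ disappears (parallel transport along $c_{y,p}$ is defined globally in $p$), and $\Theta$ is automatically fibrewise unitary. The boundedness argument is also more transparent, since the matrix $M(y,q)$ factors directly into three parallel-transport solutions of linear ODEs with uniformly $C^\infty$-bounded coefficients. The paper's route has the advantage of being closer to a purely topological construction, and the finite cover of $F$ is re-used later in the paper. One small cleanup: your remark about ``shrinking $r_\cE$ so that parallel transport is defined'' is unnecessary (parallel transport exists along any curve); what you actually need $r_\cE$ small for is to ensure that the curves $c_{y,p}$ and the relevant geodesics in $M$ have uniformly bounded length and can be covered by boundedly many admissible charts, so that the ODE coefficients are indeed uniformly controlled. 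This is exactly the bookkeeping you flag in your last paragraph.
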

\begin{proof}
The statement that $\Pi_\cE{:}\,\cE\to B$ is a fibre bundle is of course just the existence of local trivialisations. We will prove this in detail, as the explicit construction of trivialisations together with the bounds on $M$ yield boundedness. 

Fix for the moment $x_0 \in B$, $r<r_\mathrm{inj}(B,g_B)$ and $\Phi{:}\,\pi_M^{-1}(\mathsf{B}_r(x_0))\to \mathsf{B}_r(x_0) \times F$, as above.
The idea is to take a diffeotopy $\rho_t$, $t\in[0,1]$, between $\Phi^{-1}$ and $\Phi^{-1}(x_0,\cdot)$ such as
\begin{equation*}
 \rho_t(x,y)=\Phi^{-1}\left(\exp^B_{x_0}(t \exp^{-1}_{x_0}(x)), y\right)\,.
\end{equation*}
This yields an isomorphism of the pulled-back vector bundles (see Hatcher~\cite{Hatcher})
\begin{equation*}
 \rho_1^{*}\cE=(\Phi^{-1})^*\cE\cong \cE\vert_{\Phi^{-1}(\mathsf{B}_r(x_0)\times F)}
\end{equation*}
and
\begin{equation*}
 \rho_0^{*}\cE=\left \lbrace (x,y,e)\in \mathsf{B}_r(x_0) \times F \times \cE \big\vert \pi_\cE(e)=\Phi^{-1}(x_0,y)\right\rbrace
 = \mathsf{B}_r(x_0) \times \iota_{x_0}^*\cE_{{x_0}}\,.
\end{equation*}
Composition of this isomorphism with the map $(\Phi, 1){:}\,\cE\to(\Phi^{-1})^*\cE$ gives a trivialisation. The local trivialisations and connectedness of $B$ clearly imply that the possible choices of $\cF=\iota_x^* \cE_x$ are all isomorphic (as smooth vector bundles).

To show boundedness, we will need to be more specific and make some concrete choices in the construction, which follows that of~\cite[Prop. 1.7]{Hatcher}.
Take $x_0$, $\Phi$ and $\rho_t$ as above. We consider the bundle
\begin{equation*}
 \rho^*\cE=\left\{(t,x,y,e)\in [0,1]\times \mathsf{B}_r(x_0) \times F \times \cE \big\vert \pi_\cE(e)=\rho_t(x,y) \right\}
\end{equation*}
over $[0,1]\times \mathsf{B}_r(x_0) \times F$, with $\rho^*\cE\vert_{t=t_0}=\rho_{t_0}^*\cE$.
The idea is that this interpolates between $\rho_1^*\cE$ and $\rho_0^*\cE$, which allows us to patch together local isomorphisms given by trivialisations. 

The uniform trivialisations of $M$ give us equivalence of the distance functions of $\Phi^*(g_B\times g_F)$ and $g$ as well as $\Phi^*(g_F\vert_{\partial F})$ and $g\vert_{\partial M}$ with uniform constants. This allows us to choose $r_\cE>0$ such that there is a finite system of normal coordinate charts $\{(V_\mu, \kappa_\mu) \vert \mu \in\{1,\dots K\}\}$ of $(F,g_F)$ in a way that for every $x_0\in B$ the set $\Phi^{-1}(\mathsf{B}_r(x_0)\times V_\mu)$ is contained in a normal chart of $M$ if $r<r_\cE$.

Choose from now on $r<r_\cE$ and let $\{\chi_\mu\vert \mu \}$ be a smooth partition  of unity on~$F$ subordinate to the cover $\{V_\mu\vert \mu\}$. Consider the functions $\xi_\mu(y):=\sum_{\nu=1}^\mu \chi_\nu(y)$, $\xi_0\equiv 0$, and their graphs $\Xi_\mu:=\{(t,x,y)\vert t=\xi_\mu(y) \}$. Define a vector bundle over $ \mathsf{B}_r(x_0) \times F$ by $E_\mu:=\rho^*\cE\vert_{\Xi_\mu}$. 
Note that the fibre of $E_\mu$ at $(x,y)$ is the fibre of $\cE$ at $\rho_{\xi_\mu(y)}(x,y)$. Since $\xi_0=0$, $\xi_K=1$, this interpolates between $E_0=\rho_0^*\cE$ and $E_K=\rho_1^*\cE$.

Now let
\begin{equation*}
 \tau_\mu{:}\,\cE_{\Phi^{-1}(\mathsf{B}_r(x_0)\times V_\mu)}\to \Phi^{-1}(\mathsf{B}_r(x_0)\times V_\mu) \times \CC^n
\end{equation*}
be an admissible trivialisation. 
For any $\lambda\in\{0,\dots,K\}$ this gives a trivialisation of $E_\lambda$ over $V_\mu$ by
\begin{equation}\label{eq:def_bg_vartheta}
 \vartheta_{\lambda,\mu}{:}\,E_\lambda\vert_{V_\mu}\to \mathsf{B}_r(x_0)\times V_\mu \times \CC^n \qquad
 (\xi_\lambda(y), x, y, e)\mapsto (x,y, \mathrm{pr}_2\tau_\mu(e)).
\end{equation}
As $\xi_\mu=\xi_{\mu-1}$ on $F\setminus V_\mu$, we have $E_\mu= E_{\mu-1}$ on $\mathsf{B}_r(x_0) \times (F\setminus V_\mu)$ for $\mu\in \{1,\dots ,K\}$. Hence, we can define an isomorphism from $E_\mu$ to $E_{\mu-1}$ by
\begin{align*}
\theta_\mu:=\left\lbrace
\begin{aligned}
 &\vartheta_{\mu-1,\mu}^{-1}\circ \vartheta_{\mu, \mu} \qquad &&\text{over } \mathsf{B}_r(x_0)\times V_\mu\\
 &\id  &&\text{over } \mathsf{B}_r(x_0) \times (F\setminus V_\mu)\,.
\end{aligned}\right.
\end{align*}
Note that $\vartheta_{\mu-1,\mu}^{-1}\circ \vartheta_{\mu, \mu}$ is the identity wherever $\chi_\mu=0$, in particular near the boundary of $V_\mu$, so there is no discontinuity there. Now the finite composition
\begin{equation*}
 \widetilde\Theta:=\theta_1\circ\cdots \circ \theta_K
\end{equation*}
provides a concrete isomorphism from $\rho_1^*\cE=E_K$ to $\rho_0^*\cE=E_0$. We then take
\begin{equation*}
 \Theta:= \widetilde\Theta\circ (\Phi, 1){:}\,\cE\vert_{\pi_M^{-1}(\mathsf{B}_r(x_0))}\to \mathsf{B}_r(x_0) \times \cF\,.
\end{equation*}
Now take trivialisations $\alpha, \beta$ as in the statement of this proposition and let $\widetilde\alpha$ be the induced trivialisation of $(\Phi^{-1})^*\cE$ over $\Phi(U)$. Suppose that $\Phi(U)\cap (\mathsf{B}_r(x_0)\times V)\neq \emptyset$ (otherwise the statement is trivial) and take a point $(x,y)$ in this set. Let $L=\{\mu\in\{1,\dots,K\} \vert y\in V_\mu\}$ and denote the elements of $L$ by $\lambda_0<\dots<\lambda_{|L|}$. Then, on an open neighbourhood of $(x,y)$, we have $\chi_\mu=0$ for $\mu\notin L$ and hence $\widetilde{\Theta}=\theta_{\lambda_1}\circ\cdots\circ\theta_{\lambda_{|L|}}$. Since also $\vartheta_{\lambda_j-1, \lambda_j}=\vartheta_{\lambda_{j-1}, \lambda_j}$ for $j\in \{0,\dots, |L|\}$ (with $\lambda_0=0$), we find
\begin{align}
\beta\circ\widetilde{\Theta}\circ\widetilde{\alpha}^{-1}\label{eq:bg_Theta}
=(\beta\circ \vartheta_{0,\lambda_1}^{-1})
\circ(\vartheta_{\lambda_1,\lambda_1}\circ \vartheta_{\lambda_1,\lambda_2}^{-1})\circ\cdots
\circ(\vartheta_{\lambda_{|L|},\lambda_{|L|}}\circ\widetilde{\alpha}^{-1}).
\end{align}
The intermediate terms $\vartheta_{\lambda_j,\lambda_j}\circ \vartheta_{\lambda_j,\lambda_{j+1}}^{-1}$ are transition functions of the bundle $E_{\lambda_j}$ over $V_{\lambda_j}\cap V_{\lambda_{j+1}}$. In view of~\eqref{eq:def_bg_vartheta}, they are given by the composition of the transition function $\tau_{\lambda_j}^{-1}\circ\tau_{\lambda_{j+1}}$ with the map $(x,y)\mapsto\rho_{\xi_{\lambda_j}(y)}(x,y)$. They are thus smooth, with bounds determined by those of $\tau$, $\chi$ and $(B,g_B)$.

Since $E_{\lambda_{|L|}}=E_K=(\Phi^{-1})^*\cE$ near $(x,y)$, the term on the right of~\eqref{eq:bg_Theta} is just $(\Phi, \id)\circ \tau_{\lambda_{|L|}}\circ\alpha^{-1}\circ (\Phi^{-1}, \id)$, so it is essentially a transition function between admissible trivialisations of $\cE$.

As for the first term, $\vartheta_{0,\lambda_1}$ is obtained from the restriction of $\tau_{\lambda_1}$ to $\cE_x$, but this does not necessarily give an admissible trivialisation of $\cF\to F$, as it is obtained using normal coordinates on $M$, which do not, in general, restrict to normal coordinates on $M_x$ (for example if $M_x$ is not totally geodesic).
However, at the cost of introducing an additional transition function in both $\cF$ and $\cE$, we may assume that $\beta$ and $(\Phi,1)\circ\tau_{\lambda_1}$ are associated with normal coordinates centred at the same point $(x_0, y)$. Since both trivialisations are obtained by parallel transport, the transition function is then just the holonomy of $\nabla^\cE$ along a closed, piecewise smooth, curve in $M$ (e.g. for an interior chart the composition of a geodesic in $(M_x, g_\sV)$ with a geodesic in $(M,g)$ that starts and ends on $M_x$). This can be bounded in terms of the curvature of $\nabla^\cE$ by writing it  as the solution to a differential equation (see Große and Schneider~\cite[Lem.~5.13]{SchGr03}).
Since these bounds are independent of $x$ and $y$ this proves the claim.
\end{proof}

We remark that choosing a different metric and connection on $\cF$ still gives bounded trivialisations, by compactness of $F$. However, the bounds may then depend on $x$, as can be seen by simply scaling a given metric $h_0$ by an $x$-dependent factor $\gamma(x)>0$, $h\vert_{\cE_x}=\gamma(x)h_0$.

We will use Sobolev spaces of sections of $\cE\to M$ that are adapted to the scaling of $g_\eps$. To define these, fix $r<r_{\cE}$ (cf. Proposition~\ref{prop:bound_cE}) and choose points $\lbrace x_\nu\vert\, \nu \in \NN \rbrace$ such that the geodesic balls $U_\nu:=\mathsf{B}_r(x_\nu)$ cover $B$, with finite and globally bounded local multiplicity.
Let $\lbrace \chi_\nu\vert\, \nu \in \NN\rbrace$ be a subordinate partition of unity and $X_1^\nu, \dots, X_d^\nu$, $d=\dim(B)$, an orthonormal frame of vector fields over $U_\nu$, all of which are $C^\infty$-bounded uniformly in $\nu$ (for the existence of these objects see~\cite{eichhorn07}).
Let $\Phi_\nu, \Theta_\nu$ be trivialisations of $\pi_M^{-1}(U_\nu)$, $\Pi_\cE^{-1}(U_\nu)$ with the properties given in Proposition~\ref{prop:bound_cE}. Finally, let $\iota_\nu{:}\, F\to M_{x_\nu}$ be the inclusion of the fibre and set $\cF_\nu:=\iota_\nu^*\cE$ with the induced Riemannian metric on $F$, bundle metric and connection.
Denote furthermore by $\mathrm{pr}_2$ the projection to $\cF_\nu$, the second factor of $U_\nu\times \cF_\nu$, and define $W_{\Theta_\nu}:=\mathrm{pr}_2 \circ \Theta_\nu\circ (\Phi_\nu^{-1})^*$, that is $(W_{\Theta_\nu} \varphi)(x,y)=\mathrm{pr}_2 \Theta_\nu \varphi(\Phi_\nu^{-1}(x,y))$, for any section $\varphi$ of $\cE\vert_{\pi_M^{-1}(U_\nu)}$ and $(x,y)\in U_\nu\times F$.

Then the norm of $W^k_\eps(\cE)$ is defined by
\begin{align}\label{eq:W^k def}
 &\norm{\psi}_{W^k_\eps(\cE)}^2 \\
 &:=\sum_{\nu \in \NN} \sum_{\substack{\alpha\in \NN^d \\ |\alpha|\leq k}} \int_{U_\nu} 
 \norm{  \eps^{|\alpha|}W_{\Theta_\nu}
 \left(\nabla^\cE_{\Phi^*X_1^\nu}\right)^{\alpha_1} \cdots
 \left(\nabla^\cE_{\Phi^*X_d^\nu}\right)^{\alpha_d} (\pi_M^*\chi_\nu) \psi }^2_{W^{k-|\alpha|}(\cF_\nu)}
 \vol_{g_B}. \notag
\end{align}
The norm of $W^\ell(\cF_\nu)$ is defined in the canonical way, see~\cite{SchickDiss} 
for a detailed discussion of these issues. 

The space $W^k_\eps(\cE)$ is the closure of $C^\infty_0(\cE)$ under this norm, and $W^k_{0,\eps}(\cE)$ is the closure of those sections $\psi\in C^\infty_0(\cE)$ with $\supp(\psi)\subset M\setminus \partial M$.
This norm is equivalent to that defined using the metric $g_\eps$, $h$ and $\nabla^\cE$ in~\cite{SchickDiss}, up to a global factor $\eps^d$ due to the scaling of the volume form on $B$. More precisely,
\begin{equation*}
 C^{-1}\norm{\psi}_{W^k_\eps(\cE)}\leq \eps^d \norm{\psi}_{W^k(\cE, g_\eps,h)}\leq C\norm{\psi}_{W^k_\eps(\cE)},
\end{equation*}
with an $\eps$-independent constant.

The ellipticity of $-\Delta^\cE_{g_\eps}$ with Dirichlet conditions (see~\cite{SchickDiss} and \cite[Prop.~2.15]{Haag2016a}) then gives the estimate
\begin{equation}\label{eq:elliptic}
 C^{-1} \left(\norm{\Delta^\cE_{g_\eps}\psi}_{W^{k}_\eps(\cE)} + \norm{\psi}_\cH \right)  \leq \norm{\psi}_{W^{k+2}_\eps(\cE)} \leq C \left(\norm{\Delta^\cE_{g_\eps}\psi}_{W^{k}_\eps(\cE)} + \norm{\psi}_\cH \right)  
\end{equation}
for all $\psi \in W^1_{0,\eps}(\cE)\cap W^{k+2}_\eps(\cE)$, all $k\in \NN$ and some constant $C=C(k)$ independent of $\eps$.

\section{Construction of the almost-invariant subspace} \label{sect:super}

In this section we will construct the super-adiabatic projection $P_\eps$ whose range $P_\eps\cH$ is almost-invariant under $H$. Throughout, we will assume that Condition~\ref{cond:geometry} is satisfied and that the potential $V\in C^\infty_\mathrm{b}(\End(\cE))$ is symmetric.

The projection is constructed recursively, starting from the spectral projection $P_0$, associated to a gapped eigenband $\lambda$, annihilating the commutators with $H$ order by order. To make these calculations rigorous, we work with an adapted calculus that will be introduced in the next subsection. 
We will focus on those aspects that are new compared to the previous work~\cite{Lampart2016}, \iec where the vector bundle structure of $\cE$ plays a role. In parts of the proofs that are essentially independent of this structure we will focus on the main ideas and refer to the existing literature for technical details. A more detailed exposition of these aspects can be found in~\cite{Lampart2016} and the first author's thesis~\cite{Haag2016a}.

\subsection{The framework}
To make sense of commutators with the unbounded operator $(H, \dom(H))$ we need to make sure that certain operators map into the domain of~$H$. This involves keeping track of Sobolev-regularity and boundary conditions. In order to achieve this over several steps in the recursive construction of $P_\eps$, we will now introduce a calculus adapted to our problem.

Commutators such as $[H, P_0]$ can be expressed using commutators with $\nabla^\cE_{X^*}$, with $X\in C^\infty(\sT B)$ (recall that $X^*$ denotes the unique horizontal vector field on $M$ with $\sT \pi_M X^*=X$).
Viewing $P_0$ as a section of $\cL(\cH_\cF)$ and $\cL(\cD_\cF)$, it would be nice to think of $\nabla^\cE_{X^*}$ as a connection on these bundles, induced by $\nabla^\cE$ via the formula $(\nabla_X^{\cH_\cF}\psi)(y)=\nabla^\cE_{X^*_y}\psi(y)$.
There are, however, technical reasons why this approach has to be modified. First, these infinite-rank bundles do not have a natural differentiable structure, so it is not immediately clear to which sections $\psi$ such a formula should apply. Further, the horizontal lift $X^*$ of $X$ might not be tangent to $\partial M$. In that case, this formula does not give rise to a connection on $\cL(\cD_\cF)$, since $\nabla^\cE_{X^*_y}\psi(y)$ will not, in general, satisfy the Dirichlet condition when $\psi$ does.

For these reasons we will define the induced connection on $\cL(\cH_\cF)$ and $\cL(\cD_\cF)$ only locally, which is sufficient since the operators we deal with, e.g.~$[\nabla^\cE_{X^*}, P_0]$, are local with respect to $B$ (in the sense that $\pi_M\left(\supp ([\nabla^\cE_{X^*}, P_0]\psi)\right)\subset \pi_M\left(\supp(\psi)\right)$ for $\psi\in C^\infty_0(\cE)$).

Using the notation of Section~\ref{sect:bg}, let $x_0\in B$ be an arbitrary point, $U=\mathsf{B}_r(x_0)$ be a normal coordinate neighbourhood with radius $r<r_\cE<r_{\mathrm{inj}}(B, g_B)$, $\Phi$ an admissible trivialisation of $\pi_M^{-1}(U)$, and $\Theta$ an admissible trivialisation of $\Pi_\cE^{-1}(U)$.
Let again $W_\Theta:=\mathrm{pr}_2 \circ \Theta\circ (\Phi^{-1})^*$ and note that
\begin{equation*}
  (W_\Theta^{-1}\psi)\bigl(\Phi^{-1}(x,y)\bigr)=\Theta^{-1}(x,\psi(x,y)),
 \end{equation*}
  for a section $\psi$ of $U\times \cF\to U\times F$. Moreover, $W_\Theta$ and $W_\Theta^{-1}$ extend to bounded operators between $L^2(\cE\vert_{\pi_M^{-1}(U)})$ and $L^2(U\times \cF)$, which we interpret as $L^2(\cF)$-valued $L^2$-functions on $U$, \iec $L^2(U, L^2(\cF))$.

\begin{Lemma}\label{lem:hor_con}
 Assume the notation defined above and let $\psi\in C^1(U, L^2(\cF))$, then 
 \begin{equation*}
\nabla^\Theta_X \psi:= W_\Theta\nabla^\cE_{\Phi^*X} W_\Theta^{-1} \psi
 \end{equation*}
 defines a covariant derivative in the sense that
 \begin{equation*}
  \nabla^\Theta_X \psi=X\psi + A^\Theta(X)\psi
 \end{equation*}
with $A^\Theta(X)\in C^\infty(U,\cL(L^2(\cF)))\cap C^\infty(U,\cL(W^2(\cF)\cap W^1_0(\cF)))$. 

The connection form $A^\Theta$ is given by a smooth section of the bundle $\sT^*B\vert_U \times \End(\cF)$ over $U\times F$.
When $\cF$ is equipped with the bundle metric $h_\cF$ and connection $\nabla^\cF$ as in Proposition~\ref{prop:bound_cE} this section and all its derivatives are bounded by constants independent of $U$.
\end{Lemma}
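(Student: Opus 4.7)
The plan is to reduce the statement to two inputs: the bounded geometry of $\cE\to M$, which furnishes a bounded connection form in any admissible trivialisation of $\cE$, and Proposition~\ref{prop:bound_cE}, which provides uniform $C^\infty$-boundedness of the transition maps between $\Theta$ and such admissible trivialisations. Near a generic point $p=\Phi^{-1}(x,y)\in\pi_M^{-1}(U)$ I would pick admissible trivialisations $\alpha$ of $\cE$ on a neighbourhood $U_\alpha\subset M$ of $p$ and $\beta$ of $\cF$ on a neighbourhood $V\subset F$ of $y$, and introduce the transition matrix
\[
\tau(x,y) := \beta\circ\Theta\circ\alpha^{-1}\circ(\Phi^{-1},\id),
\]
whose uniform $C^\infty$-boundedness is exactly the content of Proposition~\ref{prop:bound_cE}.

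The computation itself is then a direct unravelling of $W_\Theta\,\nabla^\cE_{\Phi^* X}\,W_\Theta^{-1}$. The crucial simplification is that $\Phi^* X$ is, in the product coordinates $(x,y)\in U\times F$, the constant-in-$y$ vector field $(X,0)$, since $X$ lives purely on the base; in particular $\nabla^\cE_{\Phi^* X}$ involves no fibrewise derivatives, only $x$-derivatives plus pointwise multiplication by the component $\omega_\alpha(\Phi^* X)$ of the connection form in the $\alpha$-trivialisation. Carrying this out one arrives at
\[
\beta\bigl(\nabla^\Theta_X\psi\bigr)(x,y) = X\bigl(\beta\psi\bigr)(x,y) + A^\Theta_\beta(X)(x,y)\,\beta\psi(x,y),
\]
with
\[
A^\Theta_\beta(X)(x,y) = \tau(x,y)(X\tau^{-1})(x,y) + \tau(x,y)\,\omega_\alpha(\Phi^* X)|_{\Phi^{-1}(x,y)}\,\tau^{-1}(x,y).
\]
Since the right-hand side is pointwise multiplication in $y$, this exhibits $\nabla^\Theta_X=X+A^\Theta(X)$ with $A^\Theta(X)(x)$ a smooth $\End(\cF)$-valued bundle map; independence of the choices of $\alpha,\beta$ follows because the local expressions must agree on overlaps.

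The boundedness statements then follow by combining the $C^\infty$-boundedness of $\omega_\alpha$ (from bounded geometry of $\cE\to M$) with that of $\tau,\tau^{-1}$ and their derivatives (uniform in $x_0\in B$, from Proposition~\ref{prop:bound_cE}); the explicit formula for $A^\Theta_\beta$ propagates these bounds. Because $A^\Theta(X)(x)$ is pointwise multiplication by a smooth endomorphism of $\cF$, it preserves $W^2(\cF)$, maps $W^1_0(\cF)$ into itself, and hence restricts to a bounded operator on $W^2(\cF)\cap W^1_0(\cF)$; the Dirichlet-boundary issue highlighted just before the lemma concerns $\nabla^\Theta_X$ itself, not the zeroth-order piece $A^\Theta(X)$. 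The main obstacle is the bookkeeping required to extract bounds that are uniform both in the base point $x_0\in B$ and in the choice of $\alpha,\beta$, but Proposition~\ref{prop:bound_cE} has been engineered precisely to package this uniformity.
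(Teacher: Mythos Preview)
Your proposal is correct and follows essentially the same route as the paper. The only difference is one of emphasis: you introduce an auxiliary admissible trivialisation $\alpha$ of $\cE$ and write down the explicit formula $A^\Theta_\beta(X)=\tau(X\tau^{-1})+\tau\,\omega_\alpha(\Phi^*X)\,\tau^{-1}$, whereas the paper works directly with the trivialisations $\tau\circ\Theta$ of $\cE$ induced by trivialisations $\tau$ of $\cF$, and instead makes explicit the reason the local pieces patch to a section of $\End(\cF)$ --- namely that the transition functions $S=\tilde\tau\circ\tau^{-1}$ depend only on the fibre variable, so the inhomogeneous term $\dd\eta$ in the transformation law for connection one-forms vanishes on horizontal vectors. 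Both arguments invoke Proposition~\ref{prop:bound_cE} for the uniform bounds; your explicit formula makes that step slightly more transparent, while the paper's version makes the well-definedness of $A^\Theta$ as an $\End(\cF)$-valued form more explicit than your ``must agree on overlaps''.
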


\begin{proof}
 We will explicitly derive the form of $A^\Theta$, all of the claims then follow directly from this. 
Let $V\subset F$ be a normal coordinate neighbourhood. An admissible trivialisation $\tau{:}\,\cF\vert_{V}\to V\times \CC^n$ gives rise to a trivialisation $\tau\circ \Theta$ of $\cE\vert_{\Phi^{-1}(U\times V)}$. Let $\tilde\tau$ be another such trivialisation over $\tilde V$.  If $V\cap \tilde V\neq \emptyset$, the transition function between the trivialisations $\tau\circ \Theta$ and $\tilde{\tau}\circ \Theta$ equals $S:=\tilde \tau\circ \tau^{-1}$, which is the transition function between the trivialisations of $\cF$.
 Let $A^{\Theta,\tau}\in C^\infty(\sT^*(U\times V))\otimes\CC^{n\times n}$ denote the connection form 
 representing the connection $\nabla^\cE$ in the trivialisation $\tau\circ \Theta$. Then, by the specific form of $S$ and the transformation formula for connection coefficients, we have
 \begin{equation*}
  A^{\Theta,\tau}= S^{-1}A^{\Theta,\tilde\tau} S+ \dd \eta,
 \end{equation*}
where $\eta\in C^\infty(V\cap \tilde V,  \CC^{n\times n})$ depends only on the fibre coordinates. Thus, $\dd \eta$ vanishes on vectors tangent to $U\subset B$ and the restrictions of these (matrix-valued) one-forms on $U\times V \cap \tilde V$ to $\sT B$ yields
\begin{equation*}
 A^{\Theta,\tau}\vert_{\sT B}=S^{-1}A^{\Theta,\tilde\tau} \vert_{\sT B} S.
\end{equation*}
These local (on $F$) expressions can thus be patched together, yielding a unique $\End(\cF)$-valued one form $A^\Theta \in C^\infty (\sT^*B\vert_U \times\End(\cF))$ such that
\begin{equation*}
A^\Theta\vert_{V\cap \tilde{V}}=\tau^{-1}\left(A^{\Theta,\tau}\vert_{\sT B}\right)\tau=\tilde\tau^{-1}\left(A^{\Theta,\tilde\tau}\vert_{\sT B}\right)\tilde\tau.
\end{equation*}
As the connection coefficients in admissible trivialisations are uniformly bounded (see Eichhorn~\cite{eichhorn91}, for the boundary charts see also~\cite[Thm.~5.13]{SchGr03}), this shows that $A^\Theta$ is uniformly bounded with all its derivatives by Proposition~\ref{prop:bound_cE}.
Hence, $A^\Theta(v)$, $v\in \sT_x U$, viewed as an operator on $\cL(L^2(\cF))$, restricts to an operator in $\cL(W^k(\cF))$ for every $k\in \NN$. Furthermore, since it acts pointwise, $A^\Theta(v) W^k_0(\cF)\subset W^k_0(\cF)$, and the Dirichlet condition is preserved by $A^\Theta$.
\end{proof}

Observe that the definition of $\nabla^\Theta$ above, together with $\Phi^*[X,Y]=[\Phi^*X, \Phi^*Y]$, immediately gives the curvature operator as
\begin{equation*}
 \cR^\Theta(X,Y)=W_\Theta \cR^{\cE}(\Phi^*X, \Phi^*Y) W_{\Theta}^{-1},
\end{equation*}
where the right hand side has to be understood as a potential in $C^\infty(\End(\cF))$ acting on $L^2(\cF)$.

Now let us sketch how we will use the connection $\nabla^\Theta$ for the computation of commutators. 
Take $X\in C^\infty_\mathrm{b}(\sT B)$ with $g_B$-length of order one (so that $\eps X$ is bounded with respect to $\eps^{-2}g_B$) and consider the commutator $ [\nabla_{\eps X^*}^\cE, P_0]\vert_{\pi_M^{-1}(U)}$.
 Write $X^*=\Phi^*X + Y$ with a smooth and bounded vertical field $Y\in C^\infty_\mathrm{b}(\sV M)$ and introduce $W_\Theta$ to obtain
\begin{equation*}
 [\nabla_{\eps X^*}^\cE, P_0]\vert_{\pi_M^{-1}(U)}=
 W_\Theta^{-1}[\nabla^\Theta_{\eps X}, W_\Theta P_0 W_\Theta^{-1}]W_\Theta\vert_{\pi_M^{-1}(U)} + [\eps \nabla^\cE_Y, P_0]\vert_{\pi_M^{-1}(U)}.
\end{equation*}
The last term is clearly of order $\eps$ in $\cL(\cH_\cF)$, because $P_0$ is bounded from $\cH_\cF$ to $\cD_\cF$, independently of $\eps$. For the first term, we apply Lemma~\ref{lem:hor_con} and obtain
\begin{equation*}
 W_\Theta^{-1}[\eps X + \eps A^\Theta(X), W_\Theta P_0 W_\Theta^{-1}]W_\Theta\vert_{\pi_M^{-1}(U)}.
\end{equation*}
The term involving $\eps A^\Theta(X)$ is of order $\eps$ in $\cL(\cH_\cF)$, by boundedness of $A^\Theta$. The remaining term is essentially $\eps$-times the derivative of $W_\Theta P_0 W_\Theta^{-1}\in L^\infty(U,\cL(L^2(\cF)))$. 
The differentiability of this operator-valued function follows from the gap condition and the differentiability of $W_\Theta H^\cF W_\Theta^{-1}$, which we prove below,
by rather standard arguments. The choice of $\Phi^*X$ over $X^*$ as the direction of derivation is important: The operator $H^\cF$ can only be differentiated in directions which respect the Dirichlet conditions and $\Phi^*X$ is always tangent to $\partial M$.

\begin{Lemma}\label{lem:H_F diff}
 Assume the notation above. 
 Then 
\begin{equation*}
W_\Theta H^\cF W_\Theta^{-1} \in C^\infty_\mathrm{b}\Big(U,  \cL\left(W^2(\cF)\cap W^1_0(\cF),L^2(\cF)\right)\Big).
\end{equation*}
\end{Lemma}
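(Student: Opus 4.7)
The plan is to compute $W_\Theta H^\cF W_\Theta^{-1}$ explicitly as a second-order differential operator on sections of $\cF\to F$ with $x$-dependent coefficients, and then extract smoothness and boundedness in $x$ from the uniform local triviality of $M\to B$ and the bounded geometry of $\cE\to M$.

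First I would transfer the vertical geometric data to the fixed fibre $F$. Via $\Phi^{-1}$, the restriction of $g_\sV$ to $M_x$ and the mean-curvature vector of the fibre (which together describe the principal and subprincipal parts of $-\Delta_\sV^\cE$ in local coordinates) become $x$-parametrized tensors on $F$ that are $C^\infty_\mathrm{b}$ in $x$ together with all $x$-derivatives, uniformly in the choice of $U$, by Definition~\ref{def:uni-triv}. Via $\Theta$, the restriction of $\nabla^\cE$ to vertical directions induces a connection on the trivialized bundle $U\times \cF$; repeating the argument of Lemma~\ref{lem:hor_con} with vertical test vectors (which are automatically tangent to $\partial M$) shows that, relative to the reference connection $\nabla^\cF$ on $\cF$, this connection differs by an $\End(\cF)$-valued vertical one-form whose coefficients, in admissible fibre charts of $\cF$, are $C^\infty_\mathrm{b}$ in both $x$ and $y\in F$, uniformly in $U$. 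Finally, $V$ transfers to an $\End(\cF)$-valued function of the same type.

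Assembling these data via the coordinate formula for the Bochner Laplacian, $W_\Theta H^\cF W_\Theta^{-1}$ takes, in every admissible local chart, the form of a uniformly strongly elliptic second-order differential operator in the $y$-variables whose coefficients, together with all their $x$- and $y$-derivatives, are bounded uniformly in $U$. For any multi-index $\alpha$, formally differentiating these coefficients with respect to $x$ yields an operator of the same shape, hence a bounded map $W^2(\cF)\cap W^1_0(\cF)\to L^2(\cF)$ whose norm is controlled independently of $x\in U$ and of $U$; smoothness of the assignment $x\mapsto W_\Theta(x)H^\cF(x)W_\Theta(x)^{-1}$ as an operator-valued function then follows by Taylor expansion with integral remainder applied coefficientwise.

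The main subtle point I foresee is compatibility with the Dirichlet boundary condition, but this is automatic: $\Phi$ sends $\partial M_x$ to $\{x\}\times \partial F$ by construction of the bundle trivialization, and $W_\Theta$ acts pointwise in $y$, so both $W_\Theta$ and $W_\Theta^{-1}$ preserve $W^1_0$, while $x$-differentiation cannot alter the boundary condition because the target domain $W^2(\cF)\cap W^1_0(\cF)$ is itself $x$-independent.
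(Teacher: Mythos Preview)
Your proposal is correct and follows essentially the same route as the paper: both express $W_\Theta H^\cF W_\Theta^{-1}$ locally as a second-order differential operator on $\cF$ with $x$-dependent coefficients that are $C^\infty_\mathrm{b}$ by the uniform trivialisation and bounded-geometry hypotheses, then differentiate coefficientwise and use compactness of $F$ to pass from pointwise to operator-norm smoothness. One small inaccuracy worth fixing: the subprincipal part of $-\Delta^\cE_\sV$ is governed by the intrinsic Christoffel symbols of the transferred fibre metric $(\Phi^{-1})^*g_\sV$, not by the mean-curvature vector $\eta_\sV$ (which enters $\Delta^\cE_\sH$, not $\Delta^\cE_\sV$), but this does not affect the structure of your argument.
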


\begin{proof}
For this proof, abbreviate $D:=W^2(\cF)\cap W^1_0(\cF)$ and fix the norm on this space to be the one induced by the Riemannian metric, bundle metric and connection at $x_0\in U$.

Recall that $H^\cF=-\Delta^\cE_\sV + V$. Since the statement for $V$ is trivial, we need to discuss only the vertical Laplacian.
Let $\lbrace V_\mu \vert\, \mu\in\{1, \dots, K\} \rbrace$ be a covering of~$F$ by normal coordinate neighbourhoods and $\chi_\mu$ a subordinate partition of unity. Then observe that, for a local operator $A\in \cL(D, L^2(\cF))$, we have
\begin{align}
 \norm{A \psi}_{L^2(\cF)}\notag
 &=\Big\Vert A \sum_{\mu=1}^K \chi_\mu \psi\Big\Vert_{L^2(\cF)} \leq \sum_{\mu=1}^K \big\Vert A\vert_{V_\mu}\big\Vert_{\cL(D, L^2(\cF))} \norm{\chi_\mu\psi}_D \\
 &\leq C \norm{\psi}_D \sup_\mu  \norm{A\vert_{V_\mu}}_{\cL(D, L^2(\cF))}.\label{eq:loc norm}
\end{align}
The operator $A\vert_{V_\mu}$ denotes the restriction of $A$ to sections with support in $V_\mu$, so its norm is given by the operator norm from $D\cap W^2_0(\cF\vert_{V_\mu})$ to $L^2(\cF)$. This shows that it is sufficient to prove the claim for the restriction of $W_\Theta H^\cF W_\Theta^{-1}$ to an arbitrary $V_\mu$. Denote this set by $V$.

Let $Y_1,\dots Y_m$, $m=\dim(F)$, be smooth, vertical vector fields over $U\times V$ that form an orthonormal frame of $\sT F\vert_{V}$ with respect to $(\Phi^{-1})^*g_\sV$ (see~\cite[Lem.~3.8]{Lampart2016} for a construction) and let $\nabla$ denote the Levi-Cività connection of this metric. Using these, we can write 
\begin{equation*}
\tau\circ W_\Theta \Delta^\cE_\sV W_\Theta^{-1}\circ \tau^{-1}=\sum_{j=1}^m \left(\nabla^{\Theta,\tau}_{Y_j}\nabla^{\Theta,\tau}_{Y_j} - \nabla^{\Theta,\tau}_{\nabla_{Y_j} Y_j}\right),
\end{equation*}
where $\tau:\cF|_V\to V\times\CC^n$ is an admissible trivialisation of $\cF$ over $V$ and $\nabla^{\Theta,\tau}=\dd+A^{\Theta, \tau}$ in the notation used in the proof of Lemma~\ref{lem:hor_con}. This is clearly a differential operator with smooth coefficients. Its (Lie-) derivative $\mathscr{L}_X$ in the direction $X\in C^\infty_\mathrm{b}(\sT B\vert_{U})$ is 
\begin{equation*}
\sum_{j=1}^m \left[\left(\mathscr{L}_X\nabla^{\Theta,\tau}_{Y_j}\right)\nabla^{\Theta,\tau}_{Y_j} 
 +\nabla^{\Theta,\tau}_{Y_j}\left(\mathscr{L}_X\nabla^{\Theta,\tau}_{Y_j}\right)
 - \mathscr{L}_X\nabla^{\Theta,\tau}_{\nabla_{Y_j} Y_j}\right],
\end{equation*}
and for any vertical vector field $Y$
\begin{align*}
 \mathscr{L}_X\nabla^{\Theta,\tau}_{Y}&
 =\mathscr{L}_X \left(Y+ A^{\Theta,\tau}(Y)\right)=[X,Y] + \left(\mathscr{L}_X A^{\Theta,\tau}\right)(Y) + A^{\Theta,\tau} \bigl([X,Y]\bigr)\\
 &=\nabla^{\Theta,\tau}_{[X,Y]} + \left(\mathscr{L}_X A^{\Theta,\tau}\right)(Y).
\end{align*}
Since $[X,Y]$ is a vertical vector field, the derivative of $W_\Theta \Delta_\sV^\cE W_\Theta^{-1}$ is again a second-order differential operator and its norm in $\cL(D, L^2(\cF))$ is bounded since the trivialisation $\tau$, the connection form $A^{\Theta,\tau}$, and the vector fields $Y_j$ are $C^\infty$-bounded. 
Higher-order derivatives can easily be calculated in the same way.
The derivative is in the sense of the operator norm because $V$ is precompact and the vector fields $Y_j$ extend smoothly to $\overline{V}$, so the difference quotients of coefficients converge uniformly to the derivative. This yields the derivative of the operator $(W_\Theta \Delta^\cE_\sV W_\Theta^{-1}, D)$, since the domain $D$ is fixed.
\end{proof}

In order keep track of differential operators and boundary conditions in iterated commutators, we will introduce suitable algebras of operators. These are essentially differential operators in the \enquote{directions} $\Phi^*X$ with coefficients in the fibrewise operators $L^\infty(\cH_\cF)$, respectively $L^\infty(\cH_\cF,\cD_\cF)$, that depend smoothly on the direction $\Phi^*X$.

 \begin{Definition}\label{def:coeff}
  Let the objects $U, \Phi$, $\Theta$, $W_\Theta$ be as above. 
  The coefficient algebras $\cC_U \subset L^\infty(\cL(\cH_\cF)\vert_{\pi_M^{-1}(U)})$ and $\cC_U^H \subset L^\infty(\cL(\cH_\cF,\cD_\cF)\vert_{\pi_M^{-1}(U)})$ consist of those operators $A$, for which $W_\Theta  A W_\Theta^{-1}$ is a smooth function from $U$ to $\cL(L^2(\cF))$ and $\cL(L^2(\cF),W^2(\cF)\cap W^1_0(\cF))$, respectively.
 \end{Definition}

In terms of commutators and in view of Lemma~\ref{lem:hor_con}, for $A\in \cC_U$ this means that
\begin{equation}\label{eq:comm C}
 \Big[\nabla^\cE_{\Phi^*X_1},\big[ \dots [\nabla^\cE_{\Phi^*X_k}, A] \cdots\big]\Big]\in L^\infty\left(\cL(\cH_\cF)\vert_{\pi_M^{-1}(U)}\right),
\end{equation}
for any $k\in \NN$ and vector fields $X_1,\dots, X_k \in C^\infty_\mathrm{b}(\sT B\vert_U)$, and similarly for $A\in\cC_U^H$.

\begin{Definition}\label{def:Algebra}
Assume the notation above and let additionally $X_1,\dots, X_d$ be a $g_B$-orthonormal frame of uniformly $C^\infty$-bounded vector fields over $U$.

The algebra $\cA$ consists of those continuous linear operators $A\in \cL(W^\infty(\cE), L^2(\cE))$ satisfying
\begin{equation*}
 \pi_M \left(\supp A f\right)\subset \pi_M (\supp f)
\end{equation*}
for all sections $f\in W^\infty(\cE)$, such that
\begin{equation}\label{eq:A loc}
 A\vert_{\pi_M^{-1}(U)}=\sum_{\alpha \in \NN^d} A_\alpha(\eps) \eps^{|\alpha|}
 \left(\nabla^\cE_{\Phi^*X_1}\right)^{\alpha_1} \cdots
 \left(\nabla^\cE_{\Phi^*X_d}\right)^{\alpha_d},
\end{equation}
where the sum is finite, $A_\alpha\in \cC_U$ and there exist constants $C(\alpha,m, \eps)$, independent of $U$, for which
\begin{equation}\label{eq:A coeff}
 \norm{W_\Theta A_\alpha(\eps) W_\Theta^{-1}}_{C^m(U,\cL(L^2(\cF))} \leq C(\alpha, m, \eps).
\end{equation}

The algebra $\cA_H\subset \cA$ is the right ideal such that $A_\alpha(\eps)\in \cC_U^H$ and the inequality~\eqref{eq:A coeff} holds in the norm of $C^m\big(U,\cL(L^2(\cF),W^2(\cF)\cap W^1_0(\cF))\big)$.
\end{Definition}

To see that these are indeed algebras, note that~\eqref{eq:comm C} allows us to commute derivatives past the coefficients (see also Equation~\eqref{eq:A comm} below). We may also reorder derivatives, since
\begin{equation}\label{eq:PhiX_comm}
 [\nabla^\cE_{\Phi^*X_i},\nabla^\cE_{\Phi^*X_j}]=\cR^{\cE}( \Phi^*X_i, \Phi^*X_j) +  \nabla^\cE_{[\Phi^*X_i,\Phi^*X_j]}, 
\end{equation}
and $[\Phi^*X_i,\Phi^*X_j]=\Phi^*[X_i,X_j]=\sum_k \Gamma_{ij}^k \Phi^*X_k$, with smooth and bounded coefficients.
Note that the factor $\eps^{|\alpha|}$ in~\eqref{eq:A loc} is natural, because the vector fields $\eps\Phi^*X_i$  have $g_\eps$-length one. 

In order to control the number of derivatives, as well as the dependence on $\eps$, we introduce a double filtration of $\cA$, $\cA_H$ (from now on, we write $\cA_\bullet$ in statements that hold with and without the subscript $H$).
Let $\cA_\bullet^k$ be those $A\in \cA_\bullet$ for which $A_\alpha=0$ for $|\alpha|>k$ (and any $U$). These are differential operators of order at most $k$ in the directions $\Phi^*X_i$.
Then, let $\cA_\bullet^{k,\ell}$ be the elements of order $\eps^\ell$, in the sense that~\eqref{eq:A coeff} holds (in the appropriate norm) with $C(\alpha, m, \eps)=\cO(\eps^\ell)$, for all $\alpha$ and $m$.
Using Equations~\eqref{eq:comm C} and~\eqref{eq:PhiX_comm}, one easily checks that $\cA_\bullet^{k,\ell}\cA_\bullet^{m,n}\subset\cA_\bullet^{k+m, \ell+n}$. 

In view of~\eqref{eq:W^k def} it is clear that the elements of $\cA^k_\bullet$ are bounded operators from $W^k_\eps(\cE)$ to $L^2(\cE)=W^0_\eps(\cE)$. We will denote this operator norm on $\cA^k$ by $\norm{\cdot}_k$. We then have $\norm{A}_k=\cO(\eps^\ell)$ for $A\in \cA_\bullet^{k,\ell}$.
An important property is that, due to the locality of $A\in \cA_\bullet$ w.r.t.~$B$, we can estimate these norms using local bounds, that is
\begin{equation*}
 \norm{A}_k\leq C \sup_{\nu \in\NN} \norm{A\vert_{U_\nu}}_k,
\end{equation*}
where $U_\nu$ is the covering of $B$ used in~\eqref{eq:W^k def} and $A\vert_{U_\nu}$ is the restriction of $A$ to sections with support in $\pi_M^{-1}(U_\nu)$ and the constant $C$ depends on $k$, but not on $\eps$. 
The proof of this estimate is similar to the one for a finite cover~\eqref{eq:loc norm} and relies on the fact that the $U_\nu$'s have bounded local multiplicity; See~\cite[Rem. 3.6]{Haag2016a} for details.

Concerning the algebra $\cA_H$ we have the following lemma.

\begin{Lemma}
 An element  $A\in \cA_H^{k, \ell}$ defines an operator from $W^{k+2}_\eps(\cE)$ to $\dom(H)\subset W^2_\eps(\cE)$, whose norm is of order $\eps^\ell$.
\end{Lemma}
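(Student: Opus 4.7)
The plan is to argue locally on a neighbourhood $U$ as in Definition~\ref{def:Algebra}, using the explicit form~\eqref{eq:A loc}, and then to patch the estimate using the locality of elements of $\cA_H$ with respect to $B$ together with the bounded multiplicity of the cover $\{U_\nu\}$. We already know $\|A\|_k=\cO(\eps^\ell)$ as an operator from $W^k_\eps(\cE)$ to $L^2(\cE)$, so the task is to promote this to a mapping into $\dom(H)=W^2(\cE)\cap W^1_0(\cE)$ while absorbing two extra derivatives on the source side.

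First I would verify that the factor $\eps^{|\alpha|}(\nabla^\cE_{\Phi^*X_1})^{\alpha_1}\cdots(\nabla^\cE_{\Phi^*X_d})^{\alpha_d}$ with $|\alpha|\le k$ preserves the Dirichlet condition. The key geometric point is that $\Phi^*X_i$ is the pushforward of a vector field on $U$ by $\Phi^{-1}$, so it is tangent to the slices $\Phi^{-1}(U\times\{y\})$ and in particular tangent to $\partial M=\Phi^{-1}(U\times\partial F)$. Consequently, for $\psi\in W^{k+2}_\eps(\cE)\cap W^1_{0,\eps}(\cE)$, each application of $\eps\nabla^\cE_{\Phi^*X_i}$ commutes with the trace on $\partial M$ and consumes one horizontal derivative, so the result lies in $W^{k+2-|\alpha|}_\eps(\cE|_U)\cap W^1_{0,\eps}\subseteq W^2_\eps\cap W^1_{0,\eps}$.

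Next I would apply the fibrewise coefficient $A_\alpha\in\cC_U^H$. By Definition~\ref{def:coeff}, $W_\Theta A_\alpha W_\Theta^{-1}$ is a smooth $\cL\bigl(L^2(\cF),W^2(\cF)\cap W^1_0(\cF)\bigr)$-valued function on $U$, so fibrewise $A_\alpha$ lands in the fibrewise Dirichlet domain. Since $\partial M=\bigcup_{x\in B}\partial M_x$ is a union of fibre boundaries, this fibrewise Dirichlet condition coincides with the global one on $\partial M$, showing $A|_U\psi\in W^1_{0,\eps}(\cE|_U)$. To upgrade the fibrewise $W^2(\cF)$ regularity to $W^2_\eps$ on the total space, I would compute up to two horizontal derivatives of $A|_U\psi$ via the identity
\[
\eps\nabla^\cE_{\Phi^*X_i}\,A_\alpha=[\eps\nabla^\cE_{\Phi^*X_i},A_\alpha]+A_\alpha\,\eps\nabla^\cE_{\Phi^*X_i}.
\]
After conjugation by $W_\Theta$, Lemma~\ref{lem:hor_con} and the smoothness built into $\cC_U^H$ show that the commutator is $\eps\partial_{X_i}(W_\Theta A_\alpha W_\Theta^{-1})$ plus pointwise multiplication by $\eps A^\Theta(X_i)$, hence itself an element of $\cC_U^H$ whose norm is of order $\eps^{\ell+1}$. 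Iterating once more handles the second horizontal derivative, at the cost of one additional $\eps\nabla^\cE$ acting on $\psi$; the hypothesis $\psi\in W^{k+2}_\eps(\cE)$ is exactly what allows for this.

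Finally, collecting constants, the $\eps^{|\alpha|}$ from the derivatives together with the $\cO(\eps^\ell)$ bound on $W_\Theta A_\alpha W_\Theta^{-1}$ in the norm of $C^2\bigl(U,\cL(L^2(\cF),W^2(\cF)\cap W^1_0(\cF))\bigr)$ coming from~\eqref{eq:A coeff} combine to yield
\[
\|A|_U\psi\|_{W^2_\eps(\cE|_U)}\le C\,\eps^\ell\,\|\psi|_U\|_{W^{k+2}_\eps(\cE|_U)}
\]
with $C$ independent of $U$. Summing over $\nu$ using locality and bounded multiplicity of $\{U_\nu\}$, exactly as for the norm $\|A\|_k$ discussed before the lemma, produces the global estimate. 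The hardest step is the second one: verifying that the horizontal commutators $[\eps\nabla^\cE_{\Phi^*X_i},A_\alpha]$ stay in $\cC_U^H$ with the correct $\eps$-order, and that the Dirichlet condition is preserved both under horizontal differentiation (because $\Phi^*X_i$ is tangent to $\partial M$) and under the fibrewise action of $A_\alpha$ (because $\partial M$ is the union of fibre boundaries).
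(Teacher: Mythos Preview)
Your approach is correct and essentially the same as the paper's, which proceeds even more tersely by observing that the Sobolev norm \eqref{eq:W^k def} is precisely built from horizontal derivatives followed by fibrewise norms, so the local form \eqref{eq:A loc} with coefficients mapping $L^2(\cE_x)$ into $\dom(H^\cF(x))\subset W^2(\cE_x)$ immediately gives $A\colon W^{k+2}_\eps\to W^2_\eps$, and then $W^2_\eps(\cE)\cap L^2(\cD_\cF)=\dom(H)$ handles the boundary condition. One small correction: your first step is unnecessary and you should drop the restriction $\psi\in W^1_{0,\eps}(\cE)$, which is not part of the hypothesis---the Dirichlet condition on the \emph{output} comes solely from $A_\alpha\in\cC_U^H$ mapping into the fibrewise Dirichlet domain, irrespective of whether the input vanishes on $\partial M$.
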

\begin{proof}
 The fact that $A W^{k+2}_\eps(\cE)\subset W^2_\eps(\cE)$ follows immediately from the formula~\eqref{eq:W^k def}, defining the Sobolev norm, the local expression~\eqref{eq:A loc} and the fact that the local coefficients map $L^2(\cE\vert_{M_x})$ to $\dom(H^\cF(x))\subset W^2(\cE\vert_{M_x})$.
The image is contained in the domain of $H$, because the Dirichlet boundary condition is fibrewise and thus $W^2_\eps(\cE) \cap L^2(\cD_\cF)=\dom(H)$.
\end{proof}

We now show that natural objects associated with $H^\cF$ are elements of $\cA$. 

\begin{Proposition}\label{prop:R in A}
 Let $z\in C^\infty_\mathrm{b}(B, \CC)$ with $\dist\big(z(x), \sigma (H^\cF(x))\big)\geq \delta>0$, then $R^\cF(z):=(H^\cF-z)^{-1}\in \cA_H^{0,0}$. If $\lambda$ is an eigenband with a gap (Condition~\ref{cond:gap}) then $P_0$ is a continuous section of $\cL(\cH_\cF)$, $P_0\in \cA_H^{0,0}$ and $\lambda\in C^\infty_\mathrm{b}(B,\RR)$.
\end{Proposition}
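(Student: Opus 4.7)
My plan is to establish the three claims in cascade: first $R^\cF(z)\in\cA_H^{0,0}$ using Lemma~\ref{lem:H_F diff}; then the analogous property for $P_0$ via a Riesz contour integral; and finally smoothness of $\lambda$ via a local trace identity. All of the analytic work is done locally on a trivialising chart $U$ by studying the conjugated operator-valued functions $W_\Theta(\cdot)W_\Theta^{-1}$.

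For the resolvent, set $A(x):=W_\Theta\bigl(H^\cF(x)-z(x)\bigr)W_\Theta^{-1}$. By Lemma~\ref{lem:H_F diff} together with the hypothesis $z\in C^\infty_\mathrm{b}(B,\CC)$, one has $A\in C^\infty_\mathrm{b}\bigl(U,\cL(W^2(\cF)\cap W^1_0(\cF),L^2(\cF))\bigr)$ with uniform derivative bounds. The gap assumption yields $\snorm{A(x)^{-1}}_{\cL(L^2(\cF))}\le\delta^{-1}$, and fibrewise elliptic regularity for $H^\cF(x)$ (the vertical analogue of~\eqref{eq:elliptic}, uniform in $x$ by Proposition~\ref{prop:bg_M}) promotes this bound to $\snorm{A(x)^{-1}}_{\cL(L^2(\cF),W^2(\cF)\cap W^1_0(\cF))}\le C$, again uniformly. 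Smoothness of $A^{-1}$ follows from the familiar identity $\partial_X A^{-1}=-A^{-1}(\partial_X A)A^{-1}$, whose three factors compose as
\[
\cL(L^2(\cF),\cD_\cF)\circ\cL(\cD_\cF,L^2(\cF))\circ\cL(L^2(\cF),\cD_\cF)\to\cL(L^2(\cF),\cD_\cF),
\]
and the Leibniz rule iterates this with uniform bounds on every derivative in the stronger norm. This is exactly the statement $W_\Theta R^\cF(z)W_\Theta^{-1}\in\cC_U^H$ with estimates independent of $U$, so $R^\cF(z)\in\cA_H^{0,0}$.

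For $P_0$, I use a Riesz contour integral. Given $x_0\in B$, the gap hypothesis together with continuity of $f_\pm$ and upper semicontinuity of the spectrum produces a rectangular contour $\Gamma_{x_0}\subset\CC$, say with real part in $[f_-(x_0)-\delta/2,f_+(x_0)+\delta/2]$ and bounded imaginary part, that stays at distance at least $\delta/2$ from $\sigma(H^\cF(x))$ and encloses only $\lambda(x)$, for every $x$ in a neighbourhood of $x_0$. Since $f_\pm\in C_\mathrm{b}(B)$, both the length of $\Gamma_{x_0}$ and its uniform distance to the spectrum can be chosen independently of $x_0$. On such a neighbourhood
\[
P_0(x)=\frac{1}{2\pi\I}\oint_{\Gamma_{x_0}}\bigl(H^\cF(x)-\zeta\bigr)^{-1}\,\dd\zeta,
\]
and differentiation under the integral, together with the uniform bounds from the previous step (applied with the constant $x$-function $z\equiv\zeta$, $\zeta\in\Gamma_{x_0}$), gives $P_0\in\cA_H^{0,0}$. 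Continuity of $P_0$ as a section of $\cL(\cH_\cF)$ is a fortiori immediate.

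For smoothness of $\lambda$, the identity $H^\cF P_0=\lambda P_0$ together with constancy of $n:=\rank P_0$ gives $\lambda=\tfrac{1}{n}\tr(P_0 H^\cF P_0)$. Locally near $x_0$, from the smooth family of projections $Q(x):=W_\Theta P_0 W_\Theta^{-1}$ one constructs a smooth orthonormal frame $e_1(x),\dots,e_n(x)$ of $\mathrm{ran}\,Q(x)$, e.g.\ by applying Gram--Schmidt to the vectors $Q(x)e_j(x_0)$, which remain linearly independent for $x$ near $x_0$ by continuity of $Q$. With $\tilde H(x):=W_\Theta H^\cF W_\Theta^{-1}$ (smooth into $\cL(\cD_\cF,L^2(\cF))$ by Lemma~\ref{lem:H_F diff}), one then has $n\lambda(x)=\sum_{j=1}^n\scpro{e_j(x)}{\tilde H(x)e_j(x)}_{L^2(\cF)}$, a finite sum of smooth functions; uniform boundedness of $\lambda$ is immediate from $\lambda\in[f_-,f_+]$, and of its derivatives from the uniform bounds on $Q$ and $\tilde H$ and a covering argument. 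The main technical obstacle is the double-norm bookkeeping in the resolvent step: the gap alone only gives boundedness in $\cL(\cH_\cF)$, and upgrading this to $\cL(\cH_\cF,\cD_\cF)$ (as required by the definition of $\cA_H$) and then propagating uniform control through all derivatives is where the work lies.
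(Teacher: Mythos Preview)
The proposal is correct and follows essentially the same strategy as the paper: differentiate the resolvent via the identity $\partial_X A^{-1}=-A^{-1}(\partial_X A)A^{-1}$ using Lemma~\ref{lem:H_F diff}, then obtain $P_0$ by functional calculus (a Riesz contour), and deduce smoothness of $\lambda$ from that of the conjugated $H^\cF$ and $P_0$. Your treatment is considerably more explicit than the paper's terse proof---in particular your tracking of the $\cL(L^2(\cF),W^2(\cF)\cap W^1_0(\cF))$-norm via elliptic regularity and the trace formula $\lambda=\tfrac{1}{n}\tr(P_0H^\cF P_0)$ spell out details the paper leaves to the reader---but the underlying ideas are identical.
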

\begin{proof}
 For the resolvent, we need to show that $W_\Theta R^\cF(z) W_\Theta^{-1} \in \cC_U^H$. This follows immediately from Lemma~\ref{lem:H_F diff} and the identity
 \begin{align*}
 \mathscr{L}_X W_\Theta R^\cF(z) W_\Theta^{-1}&
 =\mathscr{L}_X (W_\Theta H^\cF W_\Theta^{-1} + z)^{-1}\\
 &= -(W_\Theta H^\cF W_\Theta^{-1} + z)^{-1} \left(\mathscr{L}_X W_\Theta H^\cF W_\Theta^{-1}\right) (W_\Theta H^\cF W_\Theta^{-1} + z)^{-1},
 \end{align*}
for $X\in C^\infty(\sT B\vert_{U})$. The statement for the spectral projection is deduced from this using functional calculus. The smoothness of $\lambda$ follows from smoothness of $W_\Theta H^\cF W_\Theta^{-1}$ and $W_\Theta P_0 W_\Theta^{-1}$.
\end{proof}

\begin{Corollary}\label{cor:R lambda}
 Let $R^\cF$, $\lambda$, $P_0$ as in Proposition~\ref{prop:R in A} and denote $P_0^\perp=\id_{\cH_\cF}-P_0$, then 
 \begin{equation*}
  R^\cF(\lambda)P_0^\perp=P_0^\perp R^\cF(\lambda):=\big(P_0^\perp H^\cF P_0^\perp-\lambda\big)^{-1} \in \cA_H^{0,0}.
 \end{equation*}
\end{Corollary}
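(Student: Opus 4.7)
The plan is to represent the reduced resolvent as a Cauchy integral over a contour inside the spectral gap, so that every point on the contour falls under the hypotheses of Proposition~\ref{prop:R in A}, and then conclude by integrating in the algebra.

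First, I would use the gap condition together with $\dist(f_\pm(x),\sigma(H^\cF(x)))\geq \delta$ to deduce
\[
 \sigma\bigl(H^\cF(x)\bigr)\cap \bigl[\lambda(x)-\delta,\lambda(x)+\delta\bigr] = \{\lambda(x)\}.
\]
Since $\lambda\in C^\infty_\mathrm{b}(B,\RR)$ by Proposition~\ref{prop:R in A}, the family
\[
 z_t(x) := \lambda(x) + \tfrac{\delta}{2}\,\e^{\I t}, \qquad t\in[0,2\pi],
\]
consists of $C^\infty_\mathrm{b}(B,\CC)$ functions whose $C^k_\mathrm{b}$-norms are uniform in $t$, and $\dist(z_t(x),\sigma(H^\cF(x)))\geq \delta/2$ uniformly in $x$ and $t$. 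By Proposition~\ref{prop:R in A}, $R^\cF(z_t)\in \cA_H^{0,0}$ for each $t$, with the $C^m$-bounds on $W_\Theta R^\cF(z_t)W_\Theta^{-1}$ depending only on $\|z_t\|_{C^{m+1}_\mathrm{b}}$ and the gap $\delta/2$, hence uniform in $t$.

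Next I would establish the integral representation
\[
 R^\cF(\lambda)P_0^\perp = \frac{1}{2\pi}\int_0^{2\pi} R^\cF(z_t)\,\dd t.
\]
This follows from a contour integration inside the disk $\{|\mu-\lambda(x)|<\delta\}$: $R^\cF(\mu)P_0=(\lambda-\mu)^{-1}P_0$ contributes a vanishing integral (double pole with zero residue), while $R^\cF(\mu)P_0^\perp$ is holomorphic in this disk, so Cauchy's formula applied to $R^\cF(\mu)/(\mu-\lambda(x))$ picks out $R^\cF(\lambda)P_0^\perp$. Parametrising by $\mu=z_t(x)$ yields the stated integral.

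Because $P_0$ commutes with $H^\cF$, and hence with each $R^\cF(z_t)$, the identity $R^\cF(\lambda)P_0^\perp=P_0^\perp R^\cF(\lambda)$ is inherited from the integral, and the identification with $(P_0^\perp H^\cF P_0^\perp-\lambda)^{-1}$ is immediate since on the invariant subspace $P_0^\perp\cH_\cF$ the spectrum is at distance $\geq \delta$ from $\lambda(x)$. Finally, $\cA_H^{0,0}$-membership is obtained by differentiating under the integral: for each test direction $X\in C^\infty_\mathrm{b}(\sT B\vert_U)$ and every $m\in\NN$, the derivatives of $W_\Theta R^\cF(z_t)W_\Theta^{-1}$ in the norm of $\cL\bigl(L^2(\cF),W^2(\cF)\cap W^1_0(\cF)\bigr)$ are bounded uniformly in $t\in[0,2\pi]$, so the integral of a uniformly bounded, continuous $\cC_U^H$-valued function over a compact interval lies in $\cC_U^H$, with bounds independent of $U$.

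The main (mild) technical point to verify carefully is the uniform-in-$t$ character of the $C^m$-bounds: this requires tracking the $\dd z_t/\dd x$ contributions that enter when one differentiates the resolvent identity for $R^\cF(z_t)$, but these involve only $\lambda$ (since the imaginary part $\tfrac{\delta}{2}\e^{\I t}$ is $x$-independent), whose $C^\infty_\mathrm{b}$ control was already established in Proposition~\ref{prop:R in A}.
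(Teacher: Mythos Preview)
Your argument is correct and gives a genuinely different route from the paper's proof. The paper does not use a contour integral; instead it observes that $P_0^\perp H^\cF P_0^\perp$ is again a smooth family in the sense of Lemma~\ref{lem:H_F diff} (combining the smoothness of $H^\cF$ with that of $P_0$ established in Proposition~\ref{prop:R in A}), notes that $\lambda(x)$ lies in its resolvent set with a uniform gap, and then simply re-runs the resolvent-differentiation identity from the proof of Proposition~\ref{prop:R in A} for this modified operator.

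Your approach avoids reopening the proof of Proposition~\ref{prop:R in A} and uses it purely as a black box: the identity $\frac{1}{2\pi}\int_0^{2\pi} R^\cF(z_t)\,\dd t = R^\cF(\lambda)P_0^\perp$ reduces everything to the uniform-in-$t$ membership $R^\cF(z_t)\in\cA_H^{0,0}$, which is exactly what Proposition~\ref{prop:R in A} provides. The paper's approach is a touch shorter because no integral representation is needed, but it requires the reader to see that the arguments of Proposition~\ref{prop:R in A} transfer verbatim to $P_0^\perp H^\cF P_0^\perp$. Your method is more modular, at the price of one additional (standard) functional-calculus identity.
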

\begin{proof}
Proposition~\ref{prop:R in A} shows that the operator $H^\cF P_0^\perp$ is smooth in the sense of Lemma~\ref{lem:H_F diff} and $z(x):=\lambda(x)$ is separated from the spectrum of this operator. Thus, by the arguments of Proposition~\ref{prop:R in A}, its resolvent $\big(H^\cF P_0^\perp-\lambda\big)^{-1}=R^\cF(\lambda)P_0^\perp$ is an element of $\cA_H^{0,0}$.
\end{proof}

\begin{Remark}\label{rem:H_F pert}
 If $H^\cF=H^\cF(\eps)$ depends on $\eps$, it follows easily from the proofs above that $P_0, R^\cF\in \cA_H^{0,0}$, as long as Lemma~\ref{lem:H_F diff} holds with $\eps$-uniform bounds on the derivatives. This is the case, for example, if $V$ depends on $\eps$ but is $\eps$-uniformly bounded with all its derivatives.
 
 In this scenario, we can perform the construction of $P_\eps$ without any changes, because it relies only on the calculus for $\cA_\bullet$.
\end{Remark}

In order to obtain a useful calculus, we still need to consider the horizontal Laplacian $\Delta^\cE_\sH = \tr_{\pi_M^* g_B}(\nabla^\cE)^2 - \nabla^\cE_{\eta_\sV}$.

\begin{Lemma}\label{lem:comm A}
 Let $A\in \cA_H^{k,\ell}$ and $B\in \cA^{k',\ell'}_H$. Then
 \begin{equation*}
  [\eps^2 \Delta^\cE_\sH, A]=\cO(\eps^{\ell+1})
 \end{equation*}
in $\cL(W^{k+2}_\eps(\cE), \cH)$, and 
 \begin{equation*}
  [\eps^2 \Delta^\cE_\sH, A]B\in \cA^{k+k'+1,\ell+\ell'+1}.
 \end{equation*}
\end{Lemma}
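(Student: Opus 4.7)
Both claims are local statements, in the sense that elements of $\cA$ localise above patches $U\subset B$, so I would reduce the proof to a computation in $\pi_M^{-1}(U)$ using the trivialisations $\Phi,\Theta$ and the induced connection $\nabla^\Theta$ from Lemma~\ref{lem:hor_con}. Writing $X_i^*=\Phi^*X_i+Y_i$ with $Y_i\in C^\infty_\mathrm{b}(\sV M|_U)$ and expanding the definition of $\Delta^\cE_\sH$ in this splitting yields
\[
\eps^2\Delta^\cE_\sH\big|_{\pi_M^{-1}(U)} = \sum_i D_i^2 + \eps L + \eps^2 K,
\]
where $D_i=\eps\nabla^\cE_{\Phi^*X_i}$, $L$ is a sum of mixed first-order terms $D_i\nabla^\cE_Y+\nabla^\cE_Y D_i$ plus $\cC_U$-multiples of $D_i$, and $K$ is a purely fibre-wise second-order vertical operator with $C^\infty_\mathrm{b}$ coefficients.

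For the leading piece $\sum_i D_i^2$ I would apply the Leibniz rule iteratively to $A|_U=\sum_\alpha A_\alpha(\eps\nabla)^\alpha$. The decisive observation is that
\[
[\nabla^\cE_{\Phi^*X_i},A_\alpha]\in\cC_U^H
\]
for $A_\alpha\in\cC_U^H$, which is an immediate consequence of Lemma~\ref{lem:hor_con} together with the $C^\infty$-smoothness of $W_\Theta A_\alpha W_\Theta^{-1}$ built into the definition of $\cC_U^H$. Hence $[D_i,A_\alpha]=\eps[\nabla^\cE_{\Phi^*X_i},A_\alpha]\in\cA^{0,\ell+1}_H$ whenever $A_\alpha\in\cA^{0,\ell}_H$; combined with the horizontal bracket $[D_i,D_j]=\eps^2\cR^\cE(\Phi^*X_i,\Phi^*X_j)+\eps D_{[X_i,X_j]}$ from \eqref{eq:PhiX_comm}, the identity $[D_i^2,A_\alpha D^\alpha]=D_i[D_i,A_\alpha D^\alpha]+[D_i,A_\alpha D^\alpha]D_i$ iterates to $[\sum_i D_i^2,A]\in\cA^{k+1,\ell+1}_H$, which after multiplication by $B\in\cA^{k',\ell'}_H$ lies in $\cA^{k+k'+1,\ell+\ell'+1}$.

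The subleading pieces $\eps L$ and $\eps^2 K$ are more delicate and constitute the main obstacle: the vertical derivatives $\nabla^\cE_Y$ are unbounded on $L^2(\cF)$ and therefore not in $\cC_U$, so $\eps^2\Delta^\cE_\sH$ itself is not an element of $\cA$. Two facts resolve this. First, because $A_\alpha\in\cC_U^H$ smooths fibre-wise into $\dom(H^\cF)$, the compositions $\nabla^\cE_Y A_\alpha$ and $(\nabla^\cE_Y)^2 A_\alpha$ are bounded on $L^2(\cF)$ and lie in $\cC_U$. Second, $B\in\cA_H$ maps into $\dom(H)$, so $A_\alpha\nabla^\cE_Y\cdot B$ and $A_\alpha(\nabla^\cE_Y)^2\cdot B$ are also well-defined and fibre-wise bounded. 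A careful accounting of horizontal and vertical orders, together with the explicit prefactors $\eps,\eps^2$, yields $\eps[L,A]B,\ \eps^2[K,A]B\in\cA^{k+k'+1,\ell+\ell'+1}$ (in fact with equal or better total $\eps$-order than the leading piece), completing the $\cA$-membership claim.

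The operator-norm statement $\|[\eps^2\Delta^\cE_\sH,A]\|_{\cL(W^{k+2}_\eps(\cE),\cH)}=\cO(\eps^{\ell+1})$ is then read off from the very same term-by-term decomposition: each summand gains one explicit power of $\eps$ either from the commutator identity $[D_i,\cdot]=\eps[\nabla^\cE_{\Phi^*X_i},\cdot]$ or from the prefactor on $L$ or $K$, while the remaining factors $D_j$ and vertical derivatives of order at most two are bounded $\eps$-uniformly from $W^{k+2}_\eps(\cE)$ to $\cH$ by~\eqref{eq:elliptic} and the definition~\eqref{eq:W^k def} of the adapted Sobolev norm.
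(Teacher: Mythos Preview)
Your proof is correct and follows essentially the same route as the paper's. Both arguments localise, split $X_i^*=\Phi^*X_i+Y_i$, gain an $\eps$ on the purely horizontal commutators via the $\cC_U^H$-stability of $[\nabla^\cE_{\Phi^*X_i},A_\alpha]$ together with \eqref{eq:PhiX_comm}, and absorb the vertical derivatives using the fibre-wise smoothing of $A_\alpha,B_\beta\in\cC_U^H$. The only difference is organisational: you group the pieces of $\eps^2\Delta^\cE_\sH$ by their explicit $\eps$-weight up front, whereas the paper keeps the splitting into horizontal and vertical factors and verifies the $\cC_U$-smoothness of $\nabla^\cE_Y A_\alpha$ more explicitly via the identity $[\nabla^\cE_{\Phi^*X_i},\nabla^\cE_{Y_j}]=\cR^\cE(\Phi^*X_i,Y_j)+\nabla^\cE_{[\Phi^*X_i,Y_j]}$ with $[\Phi^*X_i,Y_j]$ vertical. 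One small imprecision: your $K$ is not purely second order---it also collects the first-order vertical pieces coming from $-\eps^2\nabla^\cE_{(\nabla_{X_i}X_i)^*}$ and the mean-curvature term---but this does not affect the argument.
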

\begin{proof}
 Let $U$, $\Phi$ and $X_i$ be as in Definition~\ref{def:Algebra}. With this notation, we have
\begin{equation*}
\Delta^\cE_\sH\vert_{\pi_M^{-1}(U)}= \sum_{i=1}^d \nabla^\cE_{X_i^*}\nabla^\cE_{X_i^*} - \nabla^\cE_{(\nabla_{X_i} X_i)^*} -  (\pi_M^*g_B)\left(X_i^*, \eta_\sV\right)\nabla^\cE_{X_i^*}.
\end{equation*}
Define vertical fields $Y_i$ by $X_i^*=\Phi^*X_i +Y_i$ and expand the expression for $\Delta^\cE_\sH$ using this decomposition.
We then have first-order terms, and second-order terms with any combination of $\nabla^\cE_{\Phi^*X_i}$ and $\nabla^\cE_{Y_j}$.
To prove the claim, we will commute all the derivatives in the local expressions for $[\eps^2 \Delta^\cE_\sH, A]$ to the right.

To compute the commutators, let $D\in \cA_\bullet$  and consider the form of $D$ over $U$ given by Definition~\ref{def:Algebra}. We have
\begin{align}
 [\nabla^\cE_{\Phi^*X_i}, D]\vert_{\pi_M^{-1}}
 = \sum_{\alpha\in \NN} \eps^{|\alpha|} \Big(&
 [\nabla^\cE_{\Phi^*X_i}, D_\alpha] \left(\nabla^\cE_{\Phi^*X_1}\right)^{\alpha_1} \cdots
 \left(\nabla^\cE_{\Phi^*X_d}\right)^{\alpha_d}\notag\\
 & + D_\alpha \Big[\nabla^\cE_{\Phi^*X_i}, \left(\nabla^\cE_{\Phi^*X_1}\right)^{\alpha_1} \cdots
 \left(\nabla^\cE_{\Phi^*X_d}\right)^{\alpha_d}\Big]\Big).\label{eq:A comm}
\end{align}
The coefficient $[\nabla^\cE_{\Phi^*X_i}, D_\alpha]$ is in $\cC_U^\bullet$ by Lemma~\ref{lem:hor_con}, and the second line again contains derivatives of order $|\alpha|$ by~\eqref{eq:PhiX_comm}. Such a commutator thus preserves the order of $D$, both in the number of derivatives and $\eps$. This shows the claim for terms like $[\nabla^\cE_{\Phi^*X_i}\nabla^\cE_{\Phi^*X_j}, A]$ and those containing only one derivative in a direction $\Phi^*X_i$, $i\leq d$.

In order to handle terms containing vertical derivatives, 
observe that
\begin{equation*}
 [\nabla^\cE_{\Phi^*X_i},\nabla^\cE_{Y_j}]=\cR^{\cE}( \Phi^*X_i, Y_j) +  \nabla^\cE_{[\Phi^*X_i,Y_j]}, 
\end{equation*}
where $[\Phi^*X_i,Y_j]$ is vertical, because $\sT\pi_M[\Phi^*X_i, Y_j]=[\sT\pi_M \Phi^*X_i,\sT\pi_M Y_j]=0$. 
We thus have, for example,
\begin{align*}
\big[\nabla^\cE_{\Phi^*X_i},\nabla^\cE_{Y_j} A\big]&=
\nabla^\cE_{Y_j}\underbrace{\big[\nabla^\cE_{\Phi^*X_i}, A\big]}_{\in \cA^{k, \ell}_H} + \bigl(\cR^{\cE}( \Phi^*X_i, Y_j) +  \nabla^\cE_{[\Phi^*X_i,Y_j]}\bigr)A.
\end{align*}
Iterating this calculation and using that $\nabla^\cE_{Y_i} A$ is bounded on $W_\eps^k(\cE)$,  by ellipticity of $\cH_\cF$, we find that $\nabla^\cE_{Y_i} A, \nabla_{Y_i}^\cE \nabla_{Y_j}^\cE A \in \cA^{k,\ell}$, and the same for $B$.
Since $\eps \nabla^\cE_{Y_i}=\cO(\eps)$ in $\cL(W^{p+1}_\eps(\cE), W^p_\eps(\cE))$, for any $p\in \NN$, this proves the claim.
\end{proof}

Lemma~\ref{lem:comm A} implies that
\begin{equation*}
 [\eps^2 \Delta^\cE_\sH, P_0]=[\eps^2 \Delta^\cE_\sH, P_0^2]=[\eps^2 \Delta^\cE_\sH, P_0]P_0 + P_0 [\eps^2 \Delta^\cE_\sH, P_0]=\cO(\eps)
\end{equation*}
in $\cL(W^2_\eps(\cE), \cH)$.
In order to treat the full operator $H$ we need to require appropriate conditions on the perturbation $H_1$. We will express these in terms of the algebras $\cA_\bullet$. It is easy to check that they are satisfied if $H_1$ is a second-order differential operator, symmetric on $\dom(H)$, with ($\eps$-uniformly) bounded coefficients.

\begin{Condition}\label{cond:H1}
The operator $H_1$ is bounded uniformly in $\eps$ from $W_\eps^{p+2}(\cE)$ to $W_\eps^p(\cE)$, for all $p\in\NN$, symmetric on $\dom(H)=W_{0,\eps}^1(\cE)\cap W^2_\eps(\cE)$ and satisfies $H_1 A\in \cA^{k+2,\ell}$, for all $A\in \cA_H^{k,\ell}$. 
\end{Condition}

With this assumption, we have
\begin{equation}\label{eq:P_0 comm}
 [H, P_0]=[-\eps^2 \Delta^\cE_\sH, P_0] + \eps H_1 P_0 -   \eps P_0 H_1=\cO(\eps)
\end{equation}
in $\cL(\dom(H), \cH)$, because $H_1 P_0 \in \cA^{2,0}$ and $P_0 H_1 = \cO(1)$, by Condition~\ref{cond:H1}. Note that we should not consider this expression on $W^2_\eps(\cE)$, as $H$ is not defined there.

\subsection{Proof of Theorem~\ref{thm:Peps}}

We will now prove our main theorem. As a preliminary step, we construct in Lemma~\ref{lem:sadiabatic} below a sequence $P^N$ of almost-projections in $\cA_H$ which are $\eps$-close to $P_0$ and commute with $H$ up to errors of order $\eps^{N+1}$. This lemma is an improved version of a similar statement in~\cite{Lampart2014,Lampart2016} with respect to the order of $P^N\in\cA_H$.
The super-adiabatic projection $P_\eps$ is then obtained from the $P^N$ using a construction due to Nenciu~\cite{Nenciu1993}.

\begin{Lemma}\label{lem:sadiabatic}
For every $k\in \NN$ there exists $P_k \in \mathcal{A}_H^{2k,0}$ such that
\begin{equation*}
 P^N=\sum_{k=0}^{N} \eps^k P_k
\end{equation*}
 satisfies
\begin{enumerate}[label=\arabic*)]
\item $(P^N)^2-P^N \in \mathcal{A}_H^{4N,N+1}$,
\item $\big[H, P^N\big]A\in \mathcal{A}^{2N+2+k,N+1+\ell}$ for all $A\in \cA_H^{k,\ell}$, and
\item $\norm{\big[H, P^N\big]}_{2N+2} = \mathcal{O}(\eps^{N+1})$ on $\dom(H)$.
\end{enumerate}
\end{Lemma}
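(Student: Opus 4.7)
The plan is induction on $N$. For the base case $N=0$ set $P^0:=P_0$, which lies in $\cA_H^{0,0}$ by Proposition~\ref{prop:R in A}; then property~1) is immediate, and properties~2) and~3) follow from the decomposition $[H,P_0]=-\eps^2[\Delta^\cE_\sH,P_0]+\eps[H_1,P_0]$ via Lemma~\ref{lem:comm A} (on the first summand) and Condition~\ref{cond:H1} (on the second).

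For the inductive step I would split the correction as $P_N=P_N^{\mathrm{D}}+P_N^{\mathrm{OD}}$ into its $P_0$-block-diagonal and $P_0$-block-off-diagonal parts and use each summand to kill the leading $\eps^N$ obstruction to one of the two structural requirements on $P^N$. The key point for $P_N^{\mathrm{D}}$ is that, while the defect $D_{N-1}:=(P^{N-1})^2-P^{N-1}$ lies a priori only in $\cA_H^{4(N-1),N}$, its $\eps^N$-homogeneous part equals
\[
\hat D_N:=\sum_{\substack{i+j=N\\ i,j\geq 1}}P_iP_j\in\cA_H^{2N,0}
\]
by the inductive hypothesis $P_k\in\cA_H^{2k,0}$. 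The algebraic identity $D_{N-1}P^{N-1}=P^{N-1}D_{N-1}$, combined with $P^{N-1}-P_0\in\cA_H^{2(N-1),1}$, forces $[\hat D_N,P_0]$ to be of order $\eps$, that is, $\hat D_N$ is block diagonal with respect to $P_0$ up to $\cO(\eps)$. Setting $P_N^{\mathrm{D}}:=-P_0\hat D_NP_0+P_0^\perp\hat D_NP_0^\perp\in\cA_H^{2N,0}$ then makes the updated defect $(P^{N-1}+\eps^NP_N^{\mathrm{D}})^2-(P^{N-1}+\eps^NP_N^{\mathrm{D}})$ an element of $\cA_H^{4N,N+1}$.

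For $P_N^{\mathrm{OD}}$, denote $\tilde P:=P^{N-1}+\eps^NP_N^{\mathrm{D}}$ and $\tilde E:=\tilde P^2-\tilde P\in\cA_H^{4N,N+1}$. The identity $\tilde P[H,\tilde P]\tilde P=\tilde PH\tilde E-\tilde EH\tilde P$ combined with $\tilde P=P_0+\cO(\eps)$ makes the $P_0$-block-diagonal part of $[H,\tilde P]$ automatically of order $\eps^{N+1}$, so only the off-diagonal leading part, which I write as $\eps^N\Psi_N$, remains. Applying inductive property~2) to each of $A=P_0$, $A=P_0^\perp$ and $A=R^\cF(\lambda)P_0^\perp$ shows that the corresponding products of $\Psi_N$ with these objects all lie in $\cA^{2N,0}$. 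A direct computation using $[H^\cF,P_0]=0$ gives $[H^\cF,P_0AP_0^\perp]=-P_0A(H^\cF-\lambda)P_0^\perp$ and its adjoint on the opposite block, motivating the choice
\[
P_N^{\mathrm{OD}}:=P_0\Psi_NR^\cF(\lambda)P_0^\perp-R^\cF(\lambda)P_0^\perp\Psi_NP_0\in\cA_H^{2N,0},
\]
where the membership uses Corollary~\ref{cor:R lambda} and the right-ideal property of $\cA_H$. By construction $[H^\cF,\eps^NP_N^{\mathrm{OD}}]$ cancels the off-diagonal $\eps^N$-part of $[H,\tilde P]$; the residual terms $\eps^{N+1}[H_1,P_N^{\mathrm{OD}}]$, $\eps^{N+2}[\Delta^\cE_\sH,P_N^{\mathrm{OD}}]$ and the quadratic tail $\eps^{2N}(P_N)^2$ all fall into sufficiently high order by Condition~\ref{cond:H1} and Lemma~\ref{lem:comm A}. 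Assembling these bounds yields properties 1)--3) for $P^N:=\tilde P+\eps^NP_N^{\mathrm{OD}}$.

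The hard part will be the bookkeeping needed to keep $P_N$ inside $\cA_H^{2N,0}$ rather than the larger $\cA_H^{4(N-1),0}$ suggested by a naïve reading of the inductive hypothesis; this sharper derivative-order bound is precisely the improvement over~\cite{Lampart2014,Lampart2016} advertised in the statement. It hinges on the combinatorial observation that only products $P_iP_j$ (and analogous combinations coming from $[H,P^{N-1}]$) with $i+j=N$ contribute to the relevant leading $\eps^N$-homogeneous pieces, each such product having total derivative order $2(i+j)=2N$ rather than the worst-case $4(N-1)$.
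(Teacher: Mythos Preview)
Your plan follows the same architecture as the paper: induction on $N$, base case $P^0=P_0$, and $P_N=P_N^{\mathrm{D}}+P_N^{\mathrm{OD}}$ with the off-diagonal part built from $R^\cF(\lambda)P_0^\perp$ to kill the off-diagonal commutator defect. The base case and the off-diagonal construction are essentially identical to the paper's.

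There is, however, a genuine gap in your treatment of the diagonal part. You set $\hat D_N:=\sum_{i+j=N,\,i,j\geq 1}P_iP_j$, call it ``the $\eps^N$-homogeneous part'' of $D_{N-1}$, and then need
\[
D_{N-1}-\eps^N\hat D_N\in\cA_H^{*,N+1}
\]
for the updated defect to land in $\cA_H^{4N,N+1}$. But the filtration $\cA^{*,\ell}$ is not a grading: the $P_k$ produced by this construction are genuinely $\eps$-dependent (already $P_1$ contains $\eps^{-1}[H,P_0]=-\eps[\Delta^\cE_\sH,P_0]+[H_1,P_0]$), so in the formal expansion $D_{N-1}=\sum_j\eps^jC_j$ the coefficients $C_j$ are themselves $\eps$-dependent, and $D_{N-1}\in\cA_H^{*,N}$ does \emph{not} force $\sum_{j<N}\eps^jC_j$ to be $\cO(\eps^{N+1})$. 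Your argument that $[D_{N-1},P^{N-1}]=0$ forces $[\hat D_N,P_0]=\cO(\eps)$ has the same circularity: extracting the $\eps^N$-coefficient of an operator identity again presupposes control of the lower $C_j$.

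The paper sidesteps this by replacing your single term $\eps^N\hat D_N$ with the full truncation
\[
Q_{N}-P^{N-1}:=\sum_{j=0}^{N}\eps^j\!\!\sum_{\substack{k+\ell=j\\k,\ell\leq N-1}}P_kP_\ell-P^{N-1}
\]
(in your indexing). This object is manifestly in $\cA_H^{2N}$, since every summand $P_kP_\ell$ has $k+\ell\leq N$ and hence at most $2N$ horizontal derivatives, and it differs from $D_{N-1}=(P^{N-1})^2-P^{N-1}$ only by the terms with $k+\ell\geq N+1$, which lie in $\cA_H^{4(N-1),N+1}$. Thus $Q_N-P^{N-1}\in\cA_H^{2N,N}$ follows directly from the inductive hypothesis, with no need to isolate a ``homogeneous part''. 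Setting $\eps^NP_N^{\mathrm{D}}:=-P_0(Q_N-P^{N-1})P_0+P_0^\perp(Q_N-P^{N-1})P_0^\perp$ then gives $P_N^{\mathrm{D}}\in\cA_H^{2N,0}$ immediately. Your formula can be salvaged by carrying the stronger inductive hypothesis that $C_j=0$ for all $j\leq N-1$ (equivalently, that the off-diagonal part of each $\hat D_j$ vanishes exactly), which does hold but requires its own inductive proof; the paper's truncation is the cleaner route and is precisely what delivers the advertised improvement over~\cite{Lampart2014,Lampart2016}.
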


\begin{proof}
 Take $P_0=P^0$ as a starting point. This is an element of $\cA_H^{0,0}$ by Proposition~\ref{prop:R in A}, satisfies \textit{1)} because it is a projection, and \textit{2)}, \textit{3)}  by Lemma~\ref{lem:comm A} and Condition~\ref{cond:H1}, cf.~Equation~\eqref{eq:P_0 comm}.
 
 The object $P_{N+1}$ is defined recursively by splitting it into diagonal and off-diagonal parts w.r.t~$P_0$.
 Assume \textit{1)} and \textit{2)} hold for $P^N$ and define
\begin{equation*}
 P_{N+1}=P_{N+1}^\mathrm{D}+P_{N+1}^\mathrm{O}, 
\end{equation*}
 with
 \begin{equation*}
  \eps^{N+1}P_{N+1}^\mathrm{O}=-P_0^\perp R^\cF(\lambda)[H, P^N]P_0 + P_0[H, P^N]R^\cF(\lambda) P_0^\perp,
 \end{equation*}
and
\begin{align*}
 \eps^{N+1}P_{N+1}^\mathrm{D}=& -P_0 \left( Q_{N+1} - P^N\right) P_0 + P_0^\perp\left( Q_{N+1} - P^N\right)P_0^\perp,\\
 Q_{N+1}=&\sum_{j=0}^{N+1} \sum_{\substack{k,\ell=0\\ k+\ell=j}}^N \eps^j P_k P_\ell.
\end{align*}
The role of the diagonal part $P_{N+1}^\mathrm{D}$ is to make $P_{N+1}$ an almost-projection in the sense of \textit{1)}, while the off-diagonal part will ensure that \textit{2)} and \textit{3)} hold.

With this definition, $P_{N+1}^\mathrm{O}\in \cA^{2N+2,0}_H$, because \textit{2)} holds for $P^N$ and Corollary~\ref{cor:R lambda} (note also that $P_0^\perp \cA^{k, \ell}\subset \cA^{k, \ell} + P_0 \cA^{k, \ell}\subset \cA^{k, \ell}$). For the diagonal part, first observe that  $P_kP_\ell\in \cA^{2j}\subset \cA^{2N+2}_H$, so $P_{N+1}^\mathrm{D}\in \cA_H^{2N+2}$. Since $Q_{N+1}$ is equal to $(P^N)^2=\sum_{k,\ell=0}^N \eps^{k+\ell} P_k P_\ell$ up to terms in $\cA_H^{4N,N+2}$, the coefficients in~\eqref{eq:A loc} of $Q_{N+1}-P^N$  equal those of $((P^N)^2-P^N)$ (and vanish for $\abs{\alpha}>2N+2$), up to terms of order $\eps^{N+2}$, which shows that $P_{N+1}^\mathrm{D}\in \cA_H^{2N+2,0}$ by \textit{1)}.

In order prove Properties \textit{1)}, \textit{2)} and \textit{3)}  for $P^{N+1}$, one writes all the operators as matrices on $P_0\cH\oplus P_0^\perp \cH$ and treats the four entries separately. 
Since the arguments for the other entries of this matrix are rather similar, we will only treat half of the cases. The calculations for the remaining ones are given in~\cite[Lem.~3.17]{Lampart2016}.

For \textit{1)}, we first consider the $P_0^\perp$-$P_0^\perp$ entry. We expand $P^{N+1}=P^{N}+\eps^{N+1} P_{N+1}$ and then use $P_0^\perp P^N\in P_0^\perp (P_0 + \mathcal{A}_H^{2N,1})\subset \mathcal{A}_H^{2N,1}$, leading to the following calculation:
\begin{align}
P_0^\perp&\big((P^{N+1})^2-P^{N+1}\big) P_0^\perp\notag\\
&= P_0^\perp\big((P^{N}+\eps^{N+1}P_{N+1})^2-P^{N}-\eps^{N+1}P_{N+1}\big)P_0^\perp\notag\\
&\in\begin{aligned}[t]&P_0^\perp\big((P^N)^2-P^N +\eps^{N+1}\big(P^N P_{N+1} + P_{N+1}P^N - P_{N+1}\big)\big)P_0^\perp\\
&+\mathcal{A}_H^{2(2N+2),2N+2}
\end{aligned}\notag\\
&\subset\underbrace{P_0^\perp\big((P^N)^2-P^N\big) P_0^\perp - \eps^{N+1}P_0^\perp P_{N+1}^\mathrm{D} P_0^\perp}_{\in \cA^{4N,N+2}} +\mathcal{A}_H^{4(N+1),N+2}\notag\\
&\subset \mathcal{A}_H^{4(N+1),N+2}.\notag
\end{align}
Next, consider the $P_0$-$P_0^\perp$ entry:

\begin{align}
P_0&\big((P^{N+1})^2 - P^{N+1}\big) P_0^\perp\notag\\
&\in\begin{aligned}[t]& P_0\big((P^N)^2-P^N\big) P_0^\perp +
\eps^{N+1}\overbrace{P_0\big(P^N P_{N+1}-P_{N+1}\big)P_0^\perp}^{\in\mathcal{A}_H^{4N+1,1}}\notag\\
&+ 
\eps^{N+1}\underbrace{P_0  P_{N+1} P^NP_0^\perp}_{\in\mathcal{A}_H^{4N+1,1}}  +\mathcal{A}_H^{2(2N+2),2N+2}
\end{aligned}\\
&\subset P_0\big((P^N)^2-P^N\big) P_0^\perp + \mathcal{A}_H^{4(N+1),N+2}.\notag
\end{align}
Now we use that $Q_{N+1}-P_N \in \cA_H^{2N+2,N+1}$, by the induction hypothesis for $(P^N)^2-P_N$, to obtain
\begin{align*}
 P_0\big((P^N)^2-P_N\big)P_0^\perp&\in P_0(Q_{N+1}-P_N)P_0^\perp +  \cA_H^{4N,N+2}\\
 &\subset P^N(Q_{N+1}-P_N)P_0^\perp + \cA_H^{4N+2,N+2}\\
 &\subset \sum_{\substack{k,\ell,m=0 \\ m+k+\ell \leq N+2}}^N \eps^{k+\ell+m} P_k P_\ell P_m P_0^\perp- (P^N)^2 P_0^\perp+ \cA_H^{4N+2,N+2}\\
 &\subset(Q_{N+1}-P_N)P^N P_0^\perp + \cA_H^{4N+2,N+2}\\
 &\subset \cA_H^{4N+2,N+2},
\end{align*}
which proves the claim.

For \textit{2)}, we let, for simplicity, $A\in \cA_H^{0,0}$ and start with the $P_0$--$P_0^\perp$ entry. By Lemma~\ref{lem:comm A} and Condition~\ref{cond:H1}, we have
\begin{equation*}
 [-\eps^2\Delta^\cE_\sH+\lambda+\eps H_1, P_{N+1}]P_0^\perp A \in \cA^{2N+4,1}.
\end{equation*}
We then obtain 
\begin{equation*}
[H,P_{N+1}]P_0^\perp A\in
[H^\cF-\lambda,P_{N+1}]P_0^\perp A + \cA^{2N+4,1}.
\end{equation*}
This gives us, using that $P_0$ and $R^\cF$ commute and $(H^\cF-\lambda)P_0=0$,
\begin{align}
P_0&\bigl[H, P^N + \eps^{N+1}P_{N+1}\bigr]P_0^\perp A\notag\\
&\in P_0\big([H, P^N] + \eps^{N+1}\big[H^\cF-\lambda,P_{N+1}\big]
\big)P_0^\perp A + \cA^{2N+4,N+2}\notag\\
&\subset P_0\big([H, P^N] + \eps^{N+1}[H^\cF-\lambda, P_{N+1} P_0^\perp]\big) P_0^\perp A +\cA^{2N+4,N+2}\notag\\
&\subset P_0\big([H, P^N] - P_0 P_{N+1} P_0^\perp (H^\cF-\lambda) P_0^\perp A + \cA^{2N+4,N+2}\notag\\
&\subset \underbrace{P_0\bigl([H, P^N] - \big[H,P^N\big]R^\cF(\lambda)P_0^\perp(H^\cF-\lambda)\bigr)P_0^\perp}_{=0} A + \cA^{2N+4,N+2},\notag
\end{align}
which gives \textit{2)} for the $P_0$-$P_0^\perp$-block.

For the $P_0^\perp$-$P_0^\perp$ part, write $P^{N+1}=P^N+\eps^{N+1}P_{N+1}^\mathrm{D} + \eps^{N+1}P_{N+1}^\mathrm{O}$. Then, observe that 
\begin{equation*}
 P_0^\perp \bigl[H, \eps^{N+1}P_0 P_{N+1}P_0^\perp\bigr] P_0^\perp
 = \eps^{N+1} P_0^\perp H P_0 P_{N+1}P_0^\perp
 =\eps^{N+1}[H, P_0]P_0 P_{N+1}P_0^\perp 
\end{equation*}
is an element of $\cA^{2N+4,N+2}$ and thus
\begin{equation*}
 P_0^\perp \bigl[H, \eps^{N+1}P_{N+1}^\mathrm{O} \bigr] P_0^\perp \in \cA^{2N+4,N+2},
\end{equation*}
by the analogous calculation for $P_0^\perp P_{N+1}P_0$.

For the remaining terms, we calculate
\begin{align*}
P_0^\perp&\bigl[H, P^N + \eps^{N+1}P_{N+1}^\mathrm{D}\bigr]P_0^\perp A\notag\\
&=P_0^\perp\big[H, P^N + P_0^\perp(Q_{N+1}-P^N)P_0^\perp\big]P_0^\perp A\notag\\
&=
\begin{aligned}[t]
&P_0^\perp[H, P_0^\perp](Q_{N+1}-P^N)P_0^\perp A + P_0^\perp (Q_{N+1}-P^N)[H, P_0^\perp]P_0^\perp A\\
&+P_0^\perp [H, Q_{N+1}] P_0^\perp A.
  \end{aligned}
\end{align*}
The terms of the first line are in $\cA^{2N+4}$ because $[H, P_0^\perp] A =-[H,P_0] A\in \cA^{2,1}$ and  $(Q_{N+1}-P^N)\in \cA_H^{2N+2,N+1}$.
For the last line, we have
\begin{align*}
 P_0^\perp [H, Q_{N+1}] P_0^\perp A &=
 P_0^\perp \sum_{\substack{k,\ell=0\\ k+\ell\leq N+1}}^N \eps^{k+\ell} \left( \big[H, P_k \big] P_\ell+ P_k\big[H, P_\ell\big] \right) P_0^\perp A\\
 &=P_0^\perp \left( \sum_{\ell=1}^N  \big[H, P^{N+1-\ell} \big] \eps^{\ell} P_\ell+ \sum_{k=1}^N P_k \big[H, P^{N+1-k}\big] \right) P_0^\perp A\\
 &\in \cA^{2N+4,N+2}\,,
\end{align*}
by using the induction hypothesis on $P_k$, $P_\ell$.
This completes the proof of \textit{2)}, and the reasoning for \textit{3)} is essentially the same.
\end{proof}

\begin{proof}[Proof of Theorem~\ref{thm:Peps}]
To complete the proof, we need to construct, for any given $N$ and $\Lambda$, a projection $P_\eps\in \cL(\cH)\cap \cL(\dom(H))$ such that
\begin{equation*}
 \norm{[H, P_\eps]\varrho(H)}_{\cL(\cH)}=\cO(\eps^{N+1}),
\end{equation*}
for every measurable function $\varrho{:}\,[-\infty, \Lambda]\to [0,1]$.
$P_\eps$ will be obtained from $P^N$ by a construction that goes back to Nenciu~\cite{Nenciu1993} and has been used in many later works. We will thus only sketch this procedure, a complete presentation adapted to our notation can be found in~\cite{Lampart2016}.

The first point is to note that, by Condition~\ref{cond:H1}, the domain of $(H)^k$ is contained in $W^{2k}_\eps(\cE)$, for $\eps$ small enough, and, due to the ellipticity of $-\Delta^\cE_{g_\eps}$ with Dirichlet conditions (Equation~\eqref{eq:elliptic}) its graph norm is equivalent to the one of $W^{2k}_\eps(\cE)$. Thus, by choosing an appropriate cut-off function $\chi$, we can define a regularised version of $P^N$ by
\begin{equation*}
 P^\chi:=P_0 + (P^N-P_0)\chi(H) + \chi(H)(P^N-P_0)(1-\chi(H)).
\end{equation*}
One then checks that $P^\chi\in \cL(\cH)$ is self-adjoint and $P^\chi=P_0+\cO(\eps)$, in both $\cL(\cH)$ and $\cL(\dom(H))$.
Since $P^\chi$ is close to $P_0$, we immediately have $[H, P^\chi]=\cO(\eps)$. If $\chi$ is chosen to equal one on $\supp \varrho$ we also have
\begin{equation*}
 [H, P^\chi]\varrho(H)=[H, P^N]\varrho(H)=\cO(\eps^{N+1}),
\end{equation*}
by Lemma~\ref{lem:sadiabatic}, as $\varrho(H):\cH\to W^{2N+2}_\eps(\cE) \cap \dom(H)$ is bounded. 

The operator $P^\chi$ is not a projection, but, since it is close to $P_0$, its spectrum is contained in $\eps$-balls around zero and one. We then define the projection $P_\eps$ as the spectral projection of $P^\chi$ to the spectrum contained in a ball of radius $1/2$ around one,
\begin{equation*}
 P_\eps=\frac{\I}{2\pi} \int_\gamma (P^\chi - z)^{-1} \dd z,
\end{equation*}
where $\gamma$ is the boundary of this ball.
One easily checks that $P_\eps-P_0=\cO(\eps)$.
The proof is then completed by an analysis of the functional calculus of $H$ and $P^\chi$, which we do not replicate here.
\end{proof}

This completes the proof of our results. Let us remark that the expansion $P^N=\sum_{k=0}^N \eps^k P_k$ of Lemma~\ref{lem:sadiabatic} leads to an asymomptotic expansion of $P_\eps$, following ~\cite{Nenciu1993} or~\cite[Lem.~2.25]{Lampart2014}.

\begin{Lemma} \label{lem:expPeps}
 Let $P_k$, $k\leq N$, be as in Lemma~\ref{lem:sadiabatic} and $P_\eps$ be the associated super-adiabatic projection. Then, 
 \begin{equation*}
  \norm{\bigl(P_\eps - \textstyle{\sum}_{k=0}^\ell\, \eps^k P_k \bigr)\chi(H)}_{\cL(\cH, \dom(H))} = \cO(\eps^{\ell +1}),
 \end{equation*}
for all $\ell\leq N$ and every cut-off $\chi\in C^\infty_0((-\infty, \Lambda], [0,1])$ with $\chi^p\in C^\infty_0$ for all $p\in \NN$.
\end{Lemma}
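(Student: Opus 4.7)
The plan is to leverage the Nenciu contour-integral representation
\[
P_\eps = \tfrac{\I}{2\pi}\oint_\gamma (P^\chi - z)^{-1}\,\dd z
\]
together with the ansatz structure of $P^\chi$ to reduce the claim to a direct truncation argument plus an error of order $\eps^{N+1}$ stemming from the almost-projection defect of $P^\chi$. To this end I would first arrange the cut-off $\tilde\chi$ entering
\[
P^\chi = P_0 + (P^N - P_0)\tilde\chi(H) + \tilde\chi(H)(P^N - P_0)(\id - \tilde\chi(H))
\]
to satisfy $\tilde\chi\chi = \chi$; this is permissible, since the construction of $P_\eps$ in the proof of Theorem~\ref{thm:Peps} only requires $\tilde\chi \equiv 1$ on the spectral support being tested against. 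Then $(\id - \tilde\chi(H))\chi(H) = 0$ and $\tilde\chi(H)\chi(H) = \chi(H)$, so
\[
P^\chi \chi(H) = P^N \chi(H) = \sum_{k=0}^N \eps^k P_k \chi(H).
\]
Since each $P_k \in \cA_H^{2k,0}$ maps $W^{2k}_\eps(\cE) \supset \chi(H)\cH$ boundedly into $\dom(H)$ by the ellipticity estimate~\eqref{eq:elliptic} and the smoothing property of $\chi(H)$, truncating the sum at any $\ell \leq N$ produces a remainder of size $\cO(\eps^{\ell+1})$ in $\cL(\cH, \dom(H))$.

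Next, to control $P_\eps - P^\chi$ on the range of $\chi(H)$, I would employ the algebraic identity
\[
(P^\chi - z)^{-1} = \tfrac{P^\chi}{1-z} + \tfrac{\id - P^\chi}{-z} - (P^\chi - z)^{-1}\tfrac{(P^\chi)^2 - P^\chi}{z(1-z)},
\]
which is readily verified by multiplying through by $(P^\chi - z)$. Integrating over the circle $\gamma$ of radius $1/2$ centred at $z = 1$ and noting that $z = 0$ lies outside $\gamma$, only the first term contributes a residue, so
\[
P_\eps - P^\chi = -\tfrac{\I}{2\pi}\oint_\gamma (P^\chi - z)^{-1}\tfrac{(P^\chi)^2 - P^\chi}{z(1-z)}\,\dd z.
\]
Applying $\chi(H)$ on the right and exploiting $P^\chi \chi(H) = P^N \chi(H)$ together with the almost-commutation $[\tilde\chi(H), P^N]\chi(H) = \cO(\eps^{N+1})$—derived from $[H, P^N]\chi(H) = \cO(\eps^{N+1})$ via the Helffer--Sj{\"o}strand formula—I would rewrite
\[
((P^\chi)^2 - P^\chi)\chi(H) = ((P^N)^2 - P^N)\chi(H) + \cO(\eps^{N+1}).
\]
By Lemma~\ref{lem:sadiabatic}, item 1), $(P^N)^2 - P^N \in \cA_H^{4N,N+1}$ maps $\chi(H)\cH \subset W^{4N}_\eps(\cE)$ into $\dom(H)$ with norm $\cO(\eps^{N+1})$, while $(P^\chi - z)^{-1}$ is uniformly bounded on $\gamma$ in $\cL(\dom(H))$ since $P^\chi = P_0 + \cO(\eps)$ has spectrum clustered near $\{0, 1\}$. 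Choosing $N \geq \ell$ then yields $(P_\eps - P^\chi)\chi(H) = \cO(\eps^{\ell+1})$ in $\cL(\cH, \dom(H))$ and concludes the proof.

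The main obstacle is upgrading $\cL(\cH)$-type bounds on commutators such as $[H, P^N]\chi(H)$ to $\cL(\cH, \dom(H))$-bounds without losing any power of $\eps$. The hypothesis that $\chi^p \in C^\infty_0$ for all $p$ is essential here: it allows one to factor $\chi(H) = \chi^{1/p}(H)\cdots \chi^{1/p}(H)$ and distribute smooth spectral factors through the argument, so that each commutator computation is carried out in $\cL(\cH)$ while the remaining factors promote the result into $\dom(H)$ via~\eqref{eq:elliptic}. Beyond this bookkeeping, the argument parallels~\cite{Nenciu1993} and~\cite[Lem.~2.25]{Lampart2014}, to which the paper explicitly appeals.
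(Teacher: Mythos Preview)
The paper does not give its own proof of this lemma; it simply states the result and refers to Nenciu~\cite{Nenciu1993} and~\cite[Lem.~2.25]{Lampart2014}. Your proposal is precisely the Nenciu argument those references contain: write $P_\eps$ via the contour integral, use the resolvent identity for an almost-projection to isolate the defect $(P^\chi)^2-P^\chi$, and reduce everything to $P^N$ on the range of $\chi(H)$ via $\tilde\chi\chi=\chi$. The computation $(P^\chi-P^N)P^N\chi(H)=(\tilde\chi(H)-1)(P^N-P_0)(1-\tilde\chi(H))P^N\chi(H)$, combined with the commutator estimate $[\tilde\chi(H),P^N]\chi(H)=\cO(\eps^{N+1})$, is exactly how the cited sources handle it, and you correctly identify the role of the $\chi^p\in C^\infty_0$ hypothesis in passing from $\cL(\cH)$ to $\cL(\cH,\dom(H))$ bounds. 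Your outline is correct and matches the intended proof.
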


From this, one obtains the expansion of the effective operator~\eqref{eq:Heff_exp}, see~\cite[Prop.~4.10]{Haag2016a}.

\section*{Acknowledgements}

S.~H. would like to thank his supervisor Stefan Teufel for his great support throughout the whole PhD project. Moreover, he is indepted to the DFG Research Training Group 1838 \textit{Spectral Theory and
Dynamics of Quantum Systems} and the CEREMADE (UMR 7534 CNRS and Université Paris-Dauphine) for supporting his research stays in Paris in 2014 and 2016.
J.~L. thanks Sebastian Goette for discussions.

\providecommand{\bysame}{\leavevmode\hbox to3em{\hrulefill}\thinspace}
\providecommand{\MR}{\relax\ifhmode\unskip\space\fi MR }
\providecommand{\MRhref}[2]{%
  \href{http://www.ams.org/mathscinet-getitem?mr=#1}{#2}
}
\providecommand{\href}[2]{#2}


\begin{thebibliography}{BdOV13}

\bibitem[ALK00]{Lopez2000}
J.~A. \'{A}lvarez L\'{o}pez and Y.~A. Kordyukov, \emph{Adiabatic limits and
  spectral sequences for {R}iemannian foliations}, Geom. Funct. Anal. (GAFA)
  \textbf{10} (2000), no.~5, 977--1027.

\bibitem[ALKL14]{LoKoLe2014}
J.~A. \'{A}lvarez L{\'o}pez, Y.~A. Kordyukov, and E.~Leichtnam,
  \emph{Riemannian foliations of bounded geometry}, Math. Nachr. \textbf{287}
  (2014), no.~14-15, 1589--1608.

\bibitem[BdOV13]{Bedoya2013}
R.~Bedoya, C.~R. de~Oliveira, and A.~A. Verri, \emph{Complex
  $\gamma$-convergence and magnetic {D}irichlet {L}aplacian in bounded thin
  tubes}, J. Spectr. Theory \textbf{4} (2013), no.~3, 621--642.

\bibitem[BEK05]{Borisov2005}
D.~Borisov, T.~Ekholm, and H.~Kova\v{r}{\'i}k, \emph{Spectrum of the magnetic
  {S}chr\"{o}dinger operator in a waveguide with combined boundary conditions},
  Ann. Henri Poincar\'{e} \textbf{6} (2005), no.~2, 327--342.

\bibitem[Eic91]{eichhorn91}
J.~Eichhorn, \emph{The boundedness of connection coefficients and their
  derivatives}, Math. Nachr. \textbf{152} (1991), 145--158.

\bibitem[Eic07]{eichhorn07}
\bysame, \emph{Global analysis on open manifolds}, Nova Publishers, 2007.

\bibitem[EJK01]{Exner2001}
P.~Exner, A.~Joye, and H.~Kova\v{r}{\'i}k, \emph{Magnetic transport in a
  straight parabolic channel}, J. Phys. A Math. Gen. \textbf{34} (2001),
  no.~45, 9733--9752.

\bibitem[EK05]{Ekholm2005}
T.~Ekholm and H.~Kova\v{r}{\'i}k, \emph{Stability of the magnetic
  {S}chr\"{o}dinger operator in a waveguide}, Comm. Part. Diff. Eq. \textbf{30}
  (2005), no.~4, 539--565.

\bibitem[EK15]{Exner2015}
P.~Exner and H.~Kova\v{r}{\'i}k, \emph{Quantum waveguides}, Theoretical and
  Mathematical Physics, Springer-Verlag, 2015.

\bibitem[For95]{Forman1995}
R.~Forman, \emph{Spectral sequences and adiabatic limits}, Commun. Math. Phys.
  \textbf{168} (1995), no.~1, 57--116.

\bibitem[GS13]{SchGr03}
N.~Gro{\ss}e and C.~Schneider, \emph{Sobolev spaces on {R}iemannian manifolds
  with bounded geometry: General coordinates and traces}, Math. Nachr.
  \textbf{286} (2013), no.~16, 1586--1613.

\bibitem[Haa16]{Haag2016a}
S.~Haag, \emph{The adiabatic limit of the connection {L}aplacian with
  applications to quantum waveguides}, Ph.D. thesis, {E}berhard {K}arls
  {U}niversit\"{a}t {T}\"{u}bingen, 2016.

\bibitem[Hat03]{Hatcher}
A.~Hatcher, \emph{Vector bundles and {K}-theory}, Published online:
  http://www.math.cornell.edu/~hatcher/VBKT/VBpage.html, 2003.

\bibitem[HLT15]{Haag2015}
S.~Haag, J.~Lampart, and S.~Teufel, \emph{Generalised quantum waveguides}, Ann.
  Henri Poincar\'{e} \textbf{16} (2015), no.~11, 2535--2568.

\bibitem[HLT17]{Haag2016b}
\bysame, \emph{Generalised quantum waveguides with magnetic fields}, preprint (2017), arXiv:1710.01518.

\bibitem[KK14]{Kolb2014}
M.~Kolb and D.~Krej\v{c}i\v{r}\'{i}k, \emph{The {B}rownian traveller on
  manifolds}, J. Spectr. Theory \textbf{4} (2014), no.~2, 235--281.

\bibitem[KR14]{Krejcirik2014}
D.~Krej\v{c}i\v{r}\'{i}k and N.~Raymond, \emph{Magnetic effects in curved
  quantum waveguides}, Ann. Henri Poincar\'{e} \textbf{15} (2014), no.~10,
  1993--2024.

\bibitem[KRT15]{Krejcirik2015}
D.~Krej\v{c}i\v{r}\'{i}k, N.~Raymond, and M.~Tu\v{s}ek, \emph{The magnetic
  {L}aplacian in shrinking tubular neighbourhoods of hypersurfaces}, J. Geom.
  Anal. \textbf{25} (2015), no.~4, 2546--2564.

\bibitem[Lam14]{Lampart2014}
J.~Lampart, \emph{The adiabatic limit of {S}chr\"{o}dinger operators on fibre
  bundles}, Ph.D. thesis, {E}berhard {K}arls {U}niversit\"{a}t {T}\"{u}bingen,
  2014.

\bibitem[LT17]{Lampart2016}
J.~Lampart and S.~Teufel, \emph{The adiabatic limit of {S}chr\"{o}dinger
  operators on fibre bundles}, Math. Ann. \textbf{367} (2017), no.~3, 1647--1683.

\bibitem[MM90]{Mazzeo1990}
R.~R. Mazzeo and R.~B. Melrose, \emph{The adiabatic limit, {H}odge cohomology
  and {L}eray's spectral sequence for a fibration}, J. Diff. Geom. \textbf{31}
  (1990), 185--213.

\bibitem[MS02]{Martinez2002}
A.~Martinez and V.~Sordoni, \emph{A general reduction scheme for the
  time-dependent {B}orn-{O}ppenheimer approximation}, C. R. Acad. Sci. Paris
  \textbf{334} (2002), no.~3, 185--188.

\bibitem[MS09]{Martinez2009}
\bysame, \emph{Twisted pseudodifferential calculus and application to the
  quantum evolution of molecules}, Mem. Amer. Math. Soc. \textbf{200} (2009),
  no.~936, 82 pages.

\bibitem[Nen93]{Nenciu1993}
G~Nenciu, \emph{Linear adiabatic theory. {E}xponential estimates}, Commun.
  Math. Phys. \textbf{152} (1993), no.~3, 479--496.

\bibitem[NS04]{Nenciu2004}
G.~Nenciu and V.~Sordoni, \emph{Semiclassical limit for multistate
  {K}lein-{G}ordon systems: {A}lmost invariant subspaces and scattering
  theory}, J. Math. Phys. \textbf{45} (2004), no.~9, 3676--3696.

\bibitem[PST07]{Panati2007}
G.~Panati, H.~Spohn, and S.~Teufel, \emph{The time-dependent
  {B}orn-{O}ppenheimer approximation}, ESAIM: Math. Mod. Num. Anal. \textbf{41}
  (2007), no.~2, 297--314.

\bibitem[San08]{Sanguiao2008}
L.~Sanguiao, \emph{${L}^2$-invariants of {R}iemannian foliations}, Ann. Global
  Anal. Geom. \textbf{33} (2008), no.~3, 271--292.

\bibitem[Sch96]{SchickDiss}
T.~Schick, \emph{Analysis on $\partial$-manifolds of bounded geometry,
  {H}odge-{D}e {R}ham~isomorphism and ${L}^2$-index theorem}, Shaker, 1996.

\bibitem[Sch01]{Schick01}
\bysame, \emph{Manifolds with boundary and of bounded geometry}, Math. Nachr.
  \textbf{223} (2001), 89--102.

\bibitem[Sor03]{Sordoni2003}
V.~Sordoni, \emph{Reduction scheme for semiclassical operator-valued
  {S}chr\"{o}dinger-type equation and application to scattering}, Comm. Part.
  Diff. Eq. \textbf{28} (2003), no.~7-8, 1221--1236.

\bibitem[Teu03]{Teufel2003}
S.~Teufel, \emph{Adiabatic pertubation theory in quantum dynamics}, Lecture
  Notes in Mathematics 1821, Springer-Verlag, 2003.

\bibitem[Wit07]{Wittich2007}
O.~Wittich, \emph{A homogenization result for {L}aplacians on tubular
  neighbourhoods of closed {R}iemannian submanifolds}, Habilitation Treatise,
  2007.

\end{thebibliography}
\end{document}